\newcommand{\DRCC}{\textbf{DCBC}\xspace}
\newcommand{\CBMC}{\textbf{CBC}\xspace}
\newcommand{\DBOM}{\textbf{DBNS}\xspace}
\newcommand{\BOM}{\textbf{BNS}\xspace}
\newcommand{\DSteinerT}{\textbf{DST}\xspace}
\newcommand{\FDSteinerT}{\textbf{FDST}\xspace}
\newcommand{\CSC}{\textbf{CSC}\xspace}
\newcommand{\GST}{\textbf{GST}\xspace}
\newcommand{\DCSC}{\textbf{DCSC}\xspace}
\newcommand{\DGST}{\textbf{DGST}\xspace}
\newtheorem{theorem}{Theorem}[section]
\newtheorem{corollary}[theorem]{Corollary}
\newtheorem{lemma}[theorem]{Lemma}
\newtheorem{claim}{Claim}[section]
\title{An Approximation Algorithm for the Connected Maximum Coverage Problem in Directed Graphs}
\author{Gianlorenzo D'Angelo\\
Gran Sasso Science Institute (GSSI)\\
Viale F. Crispi, 7, 67100 - L'Aquila, Italy\\
\texttt{gianlorenzo.dangelo@gssi.it}
\and
Esmaeil Delfaraz\\
University of L'Aquila\\
Via Vetoio snc, 67100 - L'Aquila, Italy\\
\texttt{esmaeil.delfarazpahlevanloo@univaq.it}
}
\begin{document}

\maketitle

\begin{abstract}
In the Directed rooted Connected Budgeted maximum Coverage problem (\DRCC), we are given a collection of subsets $\mathcal{S}$, defined over a ground set $X$, and a directed graph $G=(V,E)$, where each node is associated with a set of $\mathcal{S}$. Each set in $\mathcal{S}$ has a different cost and each element of $X$ gives a different prize. The goal is to find a subcollection $\mathcal{S}'\subseteq \mathcal{S}$ such that $\mathcal{S}'$ induces an out-tree rooted at a given node, the total cost of the sets in $\mathcal{S}'$ does not exceed a budget $B$, and the total prize of the elements covered by $\mathcal{S}'$ is maximized.

In this paper, we provide an algorithm for \DRCC that guarantees an approximation ratio of $O\left(\frac{\sqrt{|V|}\log^2|X|}{\epsilon^2}\right)$, with a budget violation of a factor $1+\epsilon$, where $\epsilon\in (0,1]$. 
Our algorithm also implies an improved approximation factor for the budgeted node-weighted Steiner problem in directed graphs, a particular case of \DRCC where the prize function is additive, for which we improve from $O\left(\frac{1}{\epsilon^2}|V|^{2/3}\log|V|\right)$ to $O\left(\frac{1}{\epsilon^2}|V|^{1/2}\log^2|V|\right)$. 
\end{abstract}

\section{Introduction}
In the \emph{budgeted maximum coverage} problem, we are given a ground set $X$ of elements with associated prizes, a collection $\mathcal S$ of subsets of $X$ with associated costs, and a budget $B$. The aim is to find a subcollection $\mathcal{S}'\subseteq \mathcal{S}$ such that the total cost of the sets in $\mathcal{S}'$ does not exceed $B$ and the total prize of the elements covered by $\mathcal{S}'$ (i.e., $\bigcup_{S\in \mathcal{S}'} S$) is maximized~\cite{khuller1999budgeted}.
The \emph{Connected Budgeted maximum Coverage} problem (\CBMC) is a generalization of the budgeted maximum coverage problem in which the sets in $\mathcal{S}$ are associated with the nodes of a graph $G=(V,A)$ and the subcollection $\mathcal{S}'$ must induce a connected subgraph $T$ in $G$.

The \CBMC problem is motivated by several applications in multi-agent path planning, wireless sensor networks, and bioinformatics. Consider, for example, the exploration of a region through an Unmanned Aerial Vehicle that can take pictures of an area close to its current location. In this scenario, the areas to be explored correspond to the elements in $X$, and the nodes of the graph correspond to the locations that the vehicle can reach. Computing a connected maximum coverage corresponds to finding a small set of connected locations that allow the exploration of the largest possible area~\cite{chen2020optimal,xu2021throughput}. 
Another application is the deployment of wireless sensor networks in a scenario where each sensor is able to detect a set of target points in its sensing range, and one wants to deploy a bounded set of connected sensors that detects the largest number of target points~\cite{yu2019connectivity}. Here, the sensor network corresponds to the graph $G$, and the target points to the elements $X$.
Vandin et al.~\cite{vandin2011algorithms} studied \CBMC motivated by the detection of driver mutations in protein-to-protein interaction networks. In these networks, a node represents a protein, and an edge represents an interaction between two proteins. Each protein is associated with a gene mutation and a set of cancer patients who are affected by such mutation.  It is widely believed that cancer is associated with a connected series of mutations in these networks, called pathways~\cite{Hahn2002Modelling}. Therefore, finding a connected set of $B$ nodes with maximum coverage corresponds to finding the $B$ connected mutations that affect the largest number of cancer patients. Variants of \CBMC find application in network surveillance~\cite{livne2023optimally}, recovery of power networks~\cite{guha1999efficient}, and data acquisition~\cite{yank2024budgeted}.

The \emph{Directed rooted Connected Budgeted maximum Coverage} problem (\DRCC) is a generalization of \CBMC to directed graphs, where the aim is to find a rooted out-tree maximizing the prizes of covered elements and respecting a budget constraint. Besides the same applications as \CBMC in the cases where the underlying network is directed, the \DRCC problem has specific motivating application scenarios in facility location, epidemiology, and computational social choice.
Consider a large warehouse from which goods are delivered through a road network to smaller warehouses or retail shops that serve a set of customers. Here, the graph models the (directed) road network; the ground set represents the set of customers; and shops, represented as nodes in the graph, are associated with the set of customers that they may serve. In order to maximize the number of customers that can be served at a given delivery cost, one needs to compute an out-tree, rooted at the node representing the warehouse, that maximizes the overall number of covered customers and satisfies the budget constraint.
Connectivity is required because goods are distributed from the main warehouse to the open shops via a directed road network. 
Vehicles departing from the warehouse (root) reach the retail shops (selected nodes) via directed paths.
Other applications require computing rooted out-trees in directed graphs in order to reconstruct epidemic outbreaks~\cite{Mishra2023Reconstructing,Rozenshtein2016Reconstructing} or to maximize the voting power of a voter in liquid democracy~\cite{dangelo2022computation}.

\subsection*{Related work}
Problems \CBMC and \DRCC have already been studied under the lens of approximation algorithms.
However, the state-of-the-art algorithms achieve approximation ratios that are, in the worst case, linear in $|V|$~\cite{hochbaum2020approximation,ran2016approximation,vandin2011algorithms} or depend on the budget $B$~\cite{hochbaum2020approximation,d2022budgeted}, and, in some cases, only work under specific assumptions~\cite{vandin2011algorithms,hochbaum2020approximation,ran2016approximation}.

Vandin et al.~\cite{vandin2011algorithms} considered the special case of \CBMC where the cost is equal for all nodes and provided a $c\rho$-approximation, where $c=(2e-1)/(e-1)$ and $\rho$ is the radius of the connected subgraph induced by an optimal solution. Hochbaum and Rao~\cite{hochbaum2020approximation} improved this bound to $\min\{((1-1/e)(1/\rho-1/B))^{-1}, B\}$. This latter result also holds true in the more general case in which the prize function is a monotone submodular function of the set of nodes in the graph. Still, the cost function is assumed to be constant for all nodes.
Ran et al.~\cite{ran2016approximation} considered \CBMC without restrictions on the cost function but under the assumption that if two sets in $\mathcal{S}$ have a non-empty intersection, then the corresponding nodes in $G$ must be adjacent. In this setting, they provided an $O(\Delta\log{|V|})$-approximation algorithm, where $\Delta$ is the maximum degree of $G$. D'Angelo et al.~\cite{d2022budgeted} improved this result to $O(\log{|V|})$ under the same assumption. Moreover, they gave an $O(\sqrt{B})$-approximation algorithm for \DRCC.
In the worst case, $\rho=\Omega(|V|)$ and $\Delta = \Omega(|V|)$, and thus the approximation ratio of algorithms in~\cite{hochbaum2020approximation,ran2016approximation,vandin2011algorithms} are linear in $|V|$. Moreover, $B$ can be exponential in $|V|$ for general cost functions.

The generalization of \CBMC in which the prize function is a monotone submodular function on the set of nodes in the graph and the cost function on nodes is defined over positive integers has been studied by Kuo et al.~\cite{kuo2014maximizing}, who gave an $O(\Delta \sqrt{B})$-approximation algorithm. For the same problem, D'Angelo et al.~\cite{d2022budgeted} gave an $O(\sqrt{B})$-approximation algorithm, which also applies to the directed case with the same bound. 
They also considered the rooted variant of the same problem in which a specific root node is required to belong to the solution and provided an $O(\frac{1}{\epsilon^{3}}\sqrt{B})$-approximation algorithm, if a budget violation of a factor $1+\epsilon$, for some $\epsilon\in (0,1]$, is allowed.
Ghuge and Nagarajan~\cite{ghuge2020quasi} provided a tight quasi-polynomial time $O(\frac{\log n'}{\log \log n'})$-approximation algorithm for the directed case, where $n'$ is the number of nodes in an optimal solution.

The \emph{Directed Budgeted Node-weighted Steiner} problem (\DBOM) is a particular case of \DRCC in which both costs and prizes are associated with the nodes of a directed graph, and the goal is to find an out-tree rooted at a specific node that satisfies a budget constraint on the sum of costs and maximizes the sum of prizes of the nodes. For \DBOM, D'Angelo and Delfaraz~\cite{d2022approximation} gave an $O\left(\frac{1}{\epsilon^2}|V|^{2/3}\log{|V|}\right)$-approximation algorithm which violates the budget constraint by a factor of at most $1+\epsilon$, for any $\epsilon \in (0, 1]$.
 The \emph{Budgeted Node-weighted Steiner} problem (\BOM) is the \DBOM problem restricted to undirected graphs and can be seen as a particular case of \CBMC in which both costs and prizes are associated with the nodes of an undirected graph, and the goal is to find a tree
that satisfies a budget constraint on the sum of costs and maximizes the sum of prizes of the nodes. For this problem, Guha et al.~\cite{guha1999efficient} gave a polynomial time $O(\log^2 |V|)$-approximation algorithm that violates the budget constraint by a factor of at most $2$.
Moss and Rabani~\cite{moss2007approximation} improved the approximation factor to $O(\log |V|)$, with the same budget violation. Later, Bateni et al.~\cite{bateni2018improved} proposed an $O(\frac{1}{\epsilon^{2}}\log |V|)$-approximation algorithm which requires a budget violation of only
 $1+\epsilon$, for any $\epsilon\in (0,1]$.
Kortsarz and Nutov~\cite{kortsarz2009approximating} showed that this problem admits no $o(\log \log |V|)$-approximation algorithm, unless $NP \subseteq DTIME(n ^{\text{polylog}(n)})$, even if the algorithm is allowed to violate the budget constraint by a factor equal to a universal constant. Bateni et al.~\cite{bateni2018improved} showed that the integrality gap of the standard flow-based LP relaxation for the budgeted node-weighted Steiner is unbounded if no budget violation is allowed. \CBMC also generalizes the \emph{budgeted connected dominating set} problem~\cite{khuller2020analyzing} for which there exists a constant factor approximation algorithm~\cite{lamprou2020improved}.

The \emph{minimum Connected Set Cover} (\CSC) and the  \emph{Directed minimum Connected Set Cover} (\DCSC) problems are two minimization versions of \CBMC and \DRCC, where the aim is to cover \emph{all} the elements of the ground set with a minimum-cost tree or out-tree, respectively. In the \emph{node-weighted Group Steiner Tree} (\GST) and \emph{Directed node-weighted Group Steiner Tree} (\DGST), we are given a graph (directed graph, respectively) with costs associated with the nodes and a family of $k$ subsets of the nodes called groups. The aim is to find a minimum-cost tree (out-tree, respectively) that contains at least one node in each group.
Problems \CSC (\DCSC, respectively) and \GST (\DGST, respectively) are strictly related from the approximation point of view in the sense that there exists an $\alpha(|V|,|X|)$-approximation algorithm for \CSC if and only if there exists an $\alpha(|V|,k)$-approximation algorithm for \GST~\cite{Elbassioni12relation}. Both problems can be approximated by an approximation algorithm for the \emph{node-weighted Steiner tree problem in directed graphs} (\DSteinerT), where each element $e$ (group $g$, resp.) is represented as a node terminal with incoming edges from the nodes corresponding to the sets containing $e$ (or from the nodes belonging to $g$, resp.). An $\alpha(|V|,|R|)$-approximation algorithm for \DSteinerT, where $R$ is the set of terminal, implies an $\alpha(|V|,|X|)$-approximation algorithm for \CSC and \DCSC and an $\alpha(|V|,k)$-approximation algorithm for \GST and \DGST.

Problems \CSC and \GST have been initially motivated by applications in biology~\cite{debinski2000survey} and VLSI design~\cite{reich89beyond}, respectively, but find applications in also in other fields. For example, in the Watchman Route Problem, a widely studied problem in path finding~\cite{livne2023optimally,skyler2022solving}, we are given an undirected graph $G=(V,A)$, a cost function on the edges, a starting point $r$, and a \emph{line-of-sight} function $LOS:V\rightarrow 2^V$ which determines which nodes any given node can see. The aim is to compute a minimum-cost path $P$ starting from $r$ such that all the nodes in $V$ are in the line-of-sight of at least one node in $P$. The Watchman Route Problem can be approximated by an approximation algorithm for \CSC, in the special case where the ground set is made of all the nodes of $G$ and the $LOS$ function induces the family of subsets ($S_v=LOS(v)\cup \{v\}$). By standard techniques, we can move the edge costs to the nodes and transform a tree into a route by losing a 2-approximation factor.

Problems \CSC and \GST have been studied separately until Elbassioni et al.~\cite{Elbassioni12relation} made a connection between these two problems. Zhang et al.~\cite{zhang2009algorithms} gave two algorithms for \CSC with approximation ratios of $O(D_c\log |V|)$ and $O(D_c\log N)$, respectively, where $D_c$ is the length of the longest path in $G$ between two nodes corresponding to non-disjoint sets and $N$ the maximum size of a set. As observed in~\cite{Elbassioni12relation}, both bounds are $\Omega(|V|)$. Khandekar et al.~\cite{Khandekar2012} gave a $O(\sqrt{|V|}\log|V|)$-approximation algorithm for \CSC and \GST.

The variant of \GST where the costs are associated with the edges of the graph instead of the nodes has been widely investigated. 
Garg et al.~\cite{garg2000polylogarithmic} gave a randomized $O(\log N \log |V| \log\log|V|\log k)$-approximation algorithm, where $N$ is the size of the largest group.
They first gave a randomized $O(\log N\log k)$-approximation for the case when the input graph is a tree and then extended this result to general graphs by using probabilistic tree embeddings~\cite{bartal98approximating,fakcharoenphol2004}.
When the cost is associated with the nodes of the graph, such tree embeddings cannot be used, and hence, the algorithm by Garg et al. cannot be extended to \GST.
Naor et al.~\cite{naor2011online} gave a quasi-polynomial-time randomized algorithm for the \emph{online} version of \GST problem with a $O(\log^3|V|\log^7 k)$ competitive ratio and a polynomial time online algorithm for the edge-weighted version of \GST with competitive ratio of $O(\log^5|V|\log k)$.

Halperin et al.~\cite{halperin2007integrality} showed that the integrality gap of the standard flow-based linear relaxation of the edge-weighted version of \GST is $\Omega(\log^2 k/(\log\log k)^2)$. Halperin and Krauthgamer~\cite{halperin2003polylogarithmic} showed that, unless $P=NP$, this problem cannot be approximated within a factor $\Omega(\log^{2-\epsilon} k)$. Under stronger complexity assumptions, Grandoni et al.~\cite{grandoni2019log2} improved this factor to $\Omega(\frac{\log^2 k}{\log\log k})$.

\subsection*{Our results}
For \DRCC, we provide a bicriteria approximation algorithm that, for any $\epsilon\in (0,1]$, guarantees an approximation ratio of $O\left(\frac{\sqrt{|V|}\log^2{|X|}}{\epsilon^2}\right)$ at the cost of a violation in the budget constraint of a factor at most $1+\epsilon$ (see Theorem~\ref{thMainDBudget} in Section~\ref{sec:directed}). As a consequence, we improve the approximation ratio for the \DBOM from $O\left(\frac{1}{\epsilon^2}|V|^{2/3}\log{|V|}\right)$ to $O\left(\frac{1}{\epsilon^2}|V|^{1/2}\log^2{|V|}\right)$, in both cases with a budget violation of a factor at most $1+\epsilon$, for any $\epsilon \in (0, 1]$ (See Corollary~\ref{cor:MainDBudget} in Section~\ref{sec:directed}). 

Our algorithm for \DRCC uses, as a subroutine, an algorithm for the \emph{node-weighted Steiner tree problem in directed graphs} (\DSteinerT). In particular, our algorithm requires that such a subroutine computes a tree whose cost is within a bounded factor from the optimum of the \emph{standard flow-based linear programming relaxation} of \DSteinerT. Previous algorithms for \DSteinerT only focus on the approximation factor with respect to the \emph{optimum} of \DSteinerT but do not ensure a bounded factor over the optimum of its relaxation~\cite{charikar1999approximation}. Therefore, we introduce a new algorithm for \DSteinerT that guarantees this factor to be $O(\sqrt{|V|}\log{|V|})$ (see Theorem~\ref{thDSteinerTree} in Section~\ref{sec:DSteinerT}).
The combinatorial algorithm by Charikar et al.~\cite{charikar1999approximation} achieves a better approximation factor of $O(|R|^\epsilon)$, where $R$ is the set of terminals in the Steiner tree. Our result complements the result by Li and Laekhanukit~\cite{li2025polynomial}, who recently showed that the integrality gap of the standard linear program is $\Omega(|V|^{0.0418})$.

In a previous version of this paper, we erroneously claimed an improved approximation factor for \CBMC. Our algorithm was based on a flawed generalized analysis of Klein and Ravi's algorithm for the node-weighted Steiner tree problem~\cite{klein1995nearly}.

\section{Notation and Problem Statement}

For two integers $i, j$, let $[i, j]:=\{i,\ldots,j\}$
and $[i]:=[1,i]$. 
Let $G=(V, A)$ be a directed graph and $c:V \rightarrow \mathbb{R}^{\ge 0}$ be a nonnegative cost function on nodes. 

A \emph{path} is a directed graph made of a sequence of distinct nodes $(v_1, \dots, v_s)$ and a sequence of directed arcs $(v_i, v_{i+1})$, $i\in[s-1]$. An \emph{out-tree} (a.k.a. out-arborescence) is a directed graph in which there is exactly one directed path from a specific node $r$, called \emph{root}, to each other node. If a subgraph $T$ of a directed graph $G$ is an out-tree, then we say that $T$ is an out-tree of $G$.  For simplicity of reading, we will refer to out-trees simply as trees when it is clear that we are in the context of directed graphs. 
Given two nodes $u, v \in V$, the cost of a path from $u$ to $v$ in  $G$ is the sum of the cost of its nodes. 
A path from $u$ to $v$ with the minimum cost is called a \textit{shortest path} and its cost, denoted by $dist(u,v)$, is called the \textit{distance} from $u$ to $v$ in $G$.
A graph $G$ is called \emph{$B$-proper} for the node $r$ if $dist(r,v) \le B$ for any $v$ in  $V$.
For any subgraph $G'$ of $G$, we denote by $V(G')$ and $A(G')$ the set of nodes and arcs in $G'$, respectively. Given a subset of nodes $V' \subseteq V$, $G[V']$ denotes the subgraph of $G$ induced by nodes $V'$, i.e., $V(G[V']) = V'$ and $A(G[V'])=\{(u, v) \in A : u,v \in V'\}$).

Let $X$ be a ground set of elements, $\mathcal{S} \subseteq 2^X$ be a collection of subsets of $X$, and $p: X \rightarrow \mathbb{R}^{\ge 0}$ be a prize function over the elements of $X$. 
In the \emph{Directed rooted Connected Budgeted maximum Coverage} (\DRCC), each node $v$ of a directed graph $G$ is associated with a set $S_v$ of $\mathcal{S}$ and the goal is to find a rooted out-tree $T$ of $G$ with bounded cost that maximizes the overall prize of the union of the sets associated with the nodes in $T$.
Formally, in \DRCC we are given as input a ground set $X$, a collection  $\mathcal{S} \subseteq 2^X$ of subsets of $X$, a directed graph $G=(V,A)$, where each node $v\in V$ is associated with a set $S_v$ of $\mathcal{S}$, a root node $r\in V$, a cost function $c: V \rightarrow \mathbb{R}^{\ge 0}$ on the nodes of $G$, a prize function $p: X \rightarrow \mathbb{R}^{\ge 0}$ on the ground set $X$, and a budget  $B \in \mathbb{R}^+$. The goal is to find an out-tree $T$ of $G$  rooted at $r$, such that $c(T)=\sum_{v \in V(T)}c(v) \le B$ and $p(T)=\sum_{x \in X_T} p(x)$ is maximum, where $X_T=\bigcup_{v \in V(T)} S_v$.

Problem \DRCC generalizes several well-known $NP$-hard problems, including the budgeted maximum coverage problem~\cite{khuller1999budgeted}, which is the particular case where the input graph is a bidirected clique; the \emph{Directed Budgeted Node-weighted Steiner} (\DBOM) problem~\cite{d2022approximation,bateni2018improved}, which is the particular case in which each node of the graph is associated with a distinct singleton set and hence $|X|=|V|$; and the \emph{Budgeted Node-weighted Steiner} problem (\BOM), which is the undirected version of \DBOM. Therefore, \DRCC is $NP$-hard to approximate within a factor $1-1/e$ like the budgeted maximum coverage problem~\cite{F98}. Moreover, like \BOM, it admits no $o(\log \log |V|)$-approximation algorithm, unless $NP \subseteq DTIME(n ^{\text{polylog}(n)})$, even if the algorithm is allowed to violate the budget constraint by a factor equal to a universal constant~\cite{kortsarz2009approximating}.

In order to provide a bicriteria approximation algorithm for \DRCC, we will use as a subroutine a polynomial time approximation algorithm for the \emph{node-weighted Directed Steiner tree problem} (\DSteinerT), defined as follows. We are given as input a directed graph $G=(V, A)$, a root node $r\in V$, a set of terminal nodes $R\subseteq V$, and a cost function $c: V \rightarrow \mathbb{R}^{\ge 0}$ defined on the nodes of $G$. The goal is to find an out-tree of $G$ rooted at $r$ and spanning all nodes in $R$, i.e., $R\subseteq V(T)$, such that $c(T)=\sum_{v \in V(T)}c(v)$ is minimum.

Our algorithms will provide a solution with a bounded approximation ratio and a bounded violation of the budget constraint. A polynomial time algorithm is a bicriteria $(\beta, \alpha)$-approximation algorithm if it achieves an approximation ratio of $\alpha>1$ and a budget violation factor of at most $\beta>1$, that is, for any instance $I$ of \DRCC, it returns a solution $T$ such that $p(T) \ge \frac{p(T^*_B)}{\alpha}$ and $c(T) \le \beta B$, where $p(T^*_B)$ is the optimum for $I$ and $B$ is the budget in $I$.

\section{The connected budgeted maximum coverage and the budgeted node-weighted Steiner problems in directed graphs}\label{sec:directed}
In this section, we introduce our approximation algorithms for \DRCC and \DBOM. We start with the polynomial-time bicriteria $\left(1+\epsilon, O\left(\frac{\sqrt{|V|}\log^2{|X|}}{\epsilon^2}\right)\right)$-approximation algorithm for \DRCC, where $\epsilon$ is an arbitrary number in $(0,1]$. We then observe that this algorithm provides a bicriteria $\left(1+\epsilon, O\left(\frac{1}{\epsilon^2}|V|^{1/2}\log^2{|V|}\right)\right)$ for \DBOM, for $\epsilon\in (0,1]$. 

Let $I=<X, \mathcal{S}, G=(V,A), c, p, r, B>$ be an instance of \DRCC. We denote by $T^*_B$ an optimal solution for $I$.

Our algorithm for \DRCC can be summarized in the following three steps: 
\begin{enumerate}
    \item We first define a linear program, denoted as~\eqref{lpDRCC}, whose optimum $OPT$ is an upper bound on the optimum prize $p(T^*_B)$ of $I$.
    \item We give a polynomial-time algorithm that, starting from an optimal solution for~\eqref{lpDRCC}, computes a tree $T$ for which $p(T)=\Omega(OPT)$ and the ratio between prize and cost is $\gamma = \frac{p(T)}{c(T)} = \Omega\left(\frac{OPT}{B\sqrt{|V|}\log^2{|X|}}\right)$.
    \item The cost of $T$ can exceed the budget $B$ but, since the prize-to-cost ratio of $T$ is bounded, we can apply to it a variant of the trimming process given in~\cite{bateni2018improved} to obtain another tree $\hat T$ with cost $\frac{\epsilon}{2}B\le c(\hat T)\le (1+\epsilon)B$, for any $\epsilon \in (0, 1]$, and prize-to-cost ratio $\frac{p(\hat T)}{c(\hat T)}\geq \frac{\epsilon \gamma}{4}$. Therefore, the prize accrued by $\hat{T}$ is  $p(\hat{T})\geq \frac{\epsilon \gamma}{4} c(\hat{T}) = \Omega\left(\frac{\epsilon^2}{\sqrt{|V|}\log^{2}{|X|}} OPT\right)$, which implies an approximation ratio of $O\left(\frac{\sqrt{|V|}\log^{2}{|X|}}{\epsilon^2}\right)$ with a budget violation factor of at most $1+\epsilon$.
\end{enumerate}

Recall that a directed graph $G= (V, A)$ is $B$-proper for a node $r$ if, for every $v\in V$, it holds $dist(r, v)\le B$. Initially, we remove from the input graph all the nodes $v$ having a distance more than $B$ from $r$, making $G$ a $B$-proper graph for $r$.

\subsection*{Upper Bound on the Optimal Prize}

We now provide a linear program whose optimum $OPT$ is an upper bound to the optimum  $p(T^*_B)$ of $I$, i.e., $p(T^*_B) \le OPT$.

We create a directed graph $G'$ in which each node is associated with a cost function $c':V \rightarrow \mathbb{R}^{\ge 0}$ and a prize function $p':V \rightarrow \mathbb{R}^{\ge 0}$. Graph $G'$ is created from $G$ by adding, for each element $x$ of $X$, a node $w_x$ with cost 0 and prize $p(x)$ and, for each node $v\in V$ and each element $x$ that belongs to $S_v$, a directed arc from $v$ to $w_x$. Formally, we let $G' = (V', A')$, where $V' = V \cup W$ with $W=\{w_x:  x\in X\}$ and  $A' = A \cup \{(v, w_x): v\in V, x \in S_v\}$. For each $v\in V$, we let $c'(v)=c(v)$ and $p'(v)=0$, and, for each $w_x\in W$, we let $c'(w_x)=0$ and $p'(w_x)=p(x)$. For each $v\in V'$, we use shortcuts $c_v=c'(v)$ and $p_v=p'(v)$.

For every $v \in V'$, we let $\mathcal{P}_v$ be the set of simple paths in $G'$ from $r$ to $v$. Our linear program~\eqref{lpDRCC} is defined as follows.
\begin{align}\label{lpDRCC}
\text{maximize}\quad\textstyle\sum_{v \in V'} y_v p_v& \tag{LP-DCBC}\\
  \text{subject to}\textstyle\quad \sum_{v \in V'} y_v c_v &\le B\label{lpDRCC:budget}\\
 \textstyle   \sum_{P \in \mathcal{P}_v}f^v_P &= y_v,&& \forall v \in V'\setminus \{r\}\label{lpDRCC:overrallflow}\\
  \textstyle  \sum_{P \in \mathcal{P}_v : z \in P} f^v_{P}&\le y_z,&& \forall z, v \in V'\setminus \{r\}\label{lpDRCC:capacity}\\
 \textstyle   0 \le &y_v \le 1, &&\forall v\in V'\notag\\
 \textstyle   0 \le &f^v_P \le 1, &&\forall v\in V', P \in \mathcal{P}_v .\notag
\end{align}

We use variables $f^v_P$ and $y_v$, for each $v\in V'$ and $P \in \mathcal{P}_v$, where $f^v_P$ is the amount of flow sent from $r$ to $v$ using path $P$ and $y_v$ is the capacity of node $v$ and the overall amount of flow sent from $r$ to $v$.
Variables $y_v$, for $v\in V'$, are called \emph{capacity variables}, while variables $f^v_P$ for $v\in V'$ and $P \in \mathcal{P}_v$ are called \emph{flow variables}. 

The constraints in~\eqref{lpDRCC} are as follows. Constraint~\eqref{lpDRCC:budget} ensures that the (fractional) solution to the LP costs at most $B$. Constraints~\eqref{lpDRCC:overrallflow} and~\eqref{lpDRCC:capacity} formulate the connectivity constraint through a standard flow encoding, that is they ensure that the nodes $v$ with $y_v>0$ induce a subgraph in which all nodes are reachable from $r$. In particular, Constraint~\eqref{lpDRCC:overrallflow} ensures that the amount of flow that is sent from $r$ to a node $v$ is equal to $y_v$ and Constraint~\eqref{lpDRCC:capacity} ensures that the total flow from $r$ to $v$ passing through a node $z$ does not exceed $y_z$.

Note that the number of flow variables is exponential in the size of the input. However, \eqref{lpDRCC} can be solved in polynomial time since, given an assignment of capacity variables, we need to find, independently for any $v \in V'\setminus \{r\}$,  a flow from $r$ to $v$ of overall value $y_v$ that satisfies the capacities of nodes $z \in V'\setminus\{r,v\}$ (see e.g.~\cite{GM93}).

We now show that the optimum $OPT$ of~\eqref{lpDRCC} is an upper bound to the optimum of $I$. In particular, the next lemma shows that, for any feasible solution $T_B$ for $I$, we can compute a feasible solution $\{y_v\}_{v\in V'}$ for~\eqref{lpDRCC} such that $p(T_B)=\sum_{v \in V'}y_v p_v$.

\begin{lemma}\label{lmDirectedLPOtimpality}
Given an instance $I=<X, \mathcal{S}, G=(V,A), c, p, r, B>$ of \DRCC, for any feasible solution $T_B$ for $I$ there exists a feasible solution 
$\{y_v,f^v_P\}_{v\in V', P\in \mathcal{P}_v}$  for~\eqref{lpDRCC} such that $p(T_B)=\sum_{v \in V'}y_v p_v$.
\end{lemma}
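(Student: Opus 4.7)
The plan is to construct a feasible LP solution directly from the out-tree $T_B$ by assigning unit capacities to its nodes (and to the auxiliary nodes $w_x$ for elements it covers) and routing integral flow along the unique path in $T_B$ from $r$ to each such node.

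Concretely, I would proceed as follows. Let $T_B$ be the given feasible out-tree rooted at $r$ in $G$ with $c(T_B)\le B$, and let $X_{T_B}=\bigcup_{v\in V(T_B)} S_v$. For every $v\in V(T_B)$, set $y_v:=1$; for every element $x\in X_{T_B}$, set $y_{w_x}:=1$, picking an arbitrary node $v^\star(x)\in V(T_B)$ with $x\in S_{v^\star(x)}$; set $y_v:=0$ for all remaining nodes of $V'$. For the flow variables, exploit the fact that since $T_B$ is an out-tree there is a unique directed $r$-to-$v$ path $P_v$ inside $T_B$ for every $v\in V(T_B)$: set $f^v_{P_v}:=1$ and $f^v_P:=0$ for every other $P\in\mathcal{P}_v$. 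For each $w_x$ with $x\in X_{T_B}$, let $P_{w_x}$ be the path $P_{v^\star(x)}$ followed by the arc $(v^\star(x),w_x)\in A'$, set $f^{w_x}_{P_{w_x}}:=1$, and set the other flow variables to $0$. All remaining flow variables are $0$.

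Then I would check the four LP constraints in order. The budget constraint~\eqref{lpDRCC:budget} reduces to $\sum_{v\in V(T_B)}c(v)+\sum_{x\in X_{T_B}} 0=c(T_B)\le B$, since $c_{w_x}=0$. The flow-value constraints~\eqref{lpDRCC:overrallflow} hold trivially: for each $v$ with $y_v=1$ exactly one unit of flow is placed on the single chosen path, and for $y_v=0$ no flow is placed. For the capacity constraints~\eqref{lpDRCC:capacity}, observe that the chosen path $P_v$ (or $P_{w_x}$) uses only nodes inside $V(T_B)\cup\{w_x\}$, all of which have capacity $1$; thus for any $z$, at most one unit of flow destined for any fixed $v$ traverses $z$, which is bounded by $y_z\in\{0,1\}$ whenever the sum is nonzero. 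The integrality bounds $0\le y_v,f^v_P\le 1$ are immediate.

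Finally, for the objective, since $p_v=0$ on $V$ and $p_{w_x}=p(x)$, we get $\sum_{v\in V'} y_v p_v=\sum_{x\in X_{T_B}} p(x)=p(T_B)$, as required. There is no real obstacle; the only point that needs a small amount of care is that an element $x\in X_{T_B}$ may be covered by several nodes of $T_B$, so one must commit to a single representative $v^\star(x)$ to define $P_{w_x}$ and avoid double-counting flow into $w_x$. With that choice made, every constraint is verified by inspection.
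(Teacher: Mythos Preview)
Your proposal is correct and essentially identical to the paper's own proof: set $y_v=1$ on $V(T_B)$ and on the auxiliary nodes $w_x$ for $x\in X_{T_B}$, route one unit of flow along the unique $r$-to-$v$ path in $T_B$ (extended by one arc into $w_x$ via a fixed representative), and verify the budget, flow, capacity, and box constraints directly. The observation that one must fix a single representative $v^\star(x)$ for each covered element is exactly the care the paper also takes.
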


\begin{proof}
Let $X_{T_B}=\bigcup_{v \in V(T_B)} S_v$. 
We define a solution to the linear program~\eqref{lpDRCC} in which for all $v\in V(T_B)$ and $x\in X_{T_B}$, we set $y_v=1$ and $y_{w_x} = 1$, while we set $y_u=0$ for any other node $u$ of $V'$. Since $c_w=0$, for all $w\in W$, and $c(T_B)=\sum_{v \in V(T_B)} c_v \le B$, then $\sum_{v \in V'} y_v c_v = \sum_{v \in V} y_v c_v + \sum_{w \in W} y_{w} c_{w} = \sum_{v \in V(T_B)} y_v c_v \leq B$ and the budget Constraint~\eqref{lpDRCC:budget} is satisfied. 

Since $T_B$ is an out-tree, then there exists exactly one path from $r$ to $v$ in $T_B$, for each $v \in V(T_B)$. Let us denote this path by $P_v$.
 For each $x\in X_{T_B}$, let us select an arbitrary $v\in V(T_B)$ such that $x\in S_v$ and let $P_x$ be the path $P_v \cup \{(v,w_x)\}$.
For each $v \in V(T_B)$ and $x\in X_{T_B}$, we set $f^v_{P_v}=1$ and $f^{w_x}_{P_x}=1$, while any other flow variable is set to $0$. Then, Constraints~\eqref{lpDRCC:overrallflow} and~\eqref{lpDRCC:capacity} are satisfied.

Given the definition of $y$ and since $p_v=0$, for all $v\in V$, then $\sum_{v \in V'}y_v p_v =\sum_{v \in V(T_B)}y_v p_v + \sum_{x \in X_{T_B}} y_{w_x} p_{w_x}
=\sum_{x \in X_{T_B}} p_{w_x} = p(T_B)$. This concludes the proof.
\end{proof}

\subsection*{A Tree with a Good Ratio between Prize and Cost}
Here, we give a polynomial time algorithm that computes an out-tree $T$ of $G'$ rooted at $r$, whose prize is $\Omega(OPT)$ and whose ratio between prize and cost is $\Omega\left(\frac{OPT}{B\sqrt{|V|}\log^2{|X|}}\right)$. Note, however, that the cost of $T$ can exceed the budget $B$ by an unbounded factor. We will show in the next section how to trim $T$ in order to bound its cost and, at the same time, retain a good prize.
Here we show the following theorem.

\begin{theorem}\label{th:ratiotree}
There exists a polynomial time algorithm that computes an out-tree $T$ of $G'$ rooted at $r$ such that $p'(T)= \sum_{v\in V(T)}p'(v)$ $= \Omega(OPT)$ and the ratio between prize and cost of $T$ is 
\[
\frac{p'(T)}{c'(T)} =\Omega\left(\frac{OPT}{B\sqrt{|V|}\log^2{|X|}}\right),
\]
where $OPT$ is the optimum of~\eqref{lpDRCC}.
\end{theorem}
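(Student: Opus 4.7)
My plan is to produce $T$ incrementally, each round extracting from (a solution to) the residual LP a rooted out-tree of good density. The main subroutine takes as input a feasible solution $(y^*,f^*)$ of \eqref{lpDRCC} restricted to the elements $\bar X$ still uncovered, say of value $V$, and returns an out-tree $T'$ of $G'$ rooted at $r$ with $p'(T')\ge\Omega(V/\log^2|X|)$ and $p'(T')/c'(T')=\Omega(V/(B\sqrt{|V|}\log^2|X|))$. It is built from geometric bucketing followed by Theorem~\ref{thDSteinerTree}. First I would discard every $w_x\in\bar X$ whose LP contribution $y^*_{w_x}p_{w_x}$ is below $V/(4|X|)$; this loses at most $V/4$ of the LP value but enforces $p_{w_x}\ge V/(4|X|)$ and $y^*_{w_x}\ge 1/(4|X|)$ on the survivors (using that, since $G$ is $B$-proper for $r$, $V\ge\max_{w_x\in\bar X}p_{w_x}$). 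Bucket the survivors geometrically by $p_x$ into $O(\log|X|)$ classes and, within each class, by $y^*_{w_x}$ into $O(\log|X|)$ sub-classes. A double pigeonhole then yields a bucket $R$ with $p_x\in[p^*,2p^*]$, $y^*_{w_x}\in[q^*,2q^*]$, and $\sum_{w_x\in R}y^*_{w_x}p_{w_x}=\Omega(V/\log^2|X|)$. Scaling $(y^*,f^*)$ by $1/q^*$ and capping capacities at $1$ produces a feasible solution to the standard flow-based LP relaxation of \DSteinerT on $G'$ with root $r$ and terminal set $R\cup\{r\}$, of cost at most $B/q^*$. Feeding it to the algorithm of Theorem~\ref{thDSteinerTree} returns an integer out-tree $T'$ of cost at most $O(\sqrt{|V|}\log|V|)\cdot B/q^*$ containing all of $R$, hence of prize at least $|R|p^*\ge\Omega(V/(q^*\log^2|X|))\ge\Omega(V/\log^2|X|)$ (as $q^*\le 1$) and density $\Omega(V/(B\sqrt{|V|}\log^2|X|))$ after absorbing $\log|V|$ into $\log|X|$ (we restrict $G$ to the nodes reachable from $r$ within budget $B$, whose number is polynomial in $|X|$ by our preprocessing).

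\textbf{Iterative peeling.} Set $T:=\{r\}$ and $V_0:=OPT$, and at each round invoke the subroutine on the current residual LP (obtained from \eqref{lpDRCC} by forcing $y_{w_x}=0$ for the already-covered elements), append the returned tree to $T$, and stop as soon as the residual LP value drops below $OPT/2$. Every returned tree is rooted at $r$, so $T$ stays an out-tree rooted at $r$ throughout. To lower-bound the prize, notice that setting $y_{w_x}=0$ on the elements $X_i$ newly covered at round $i$ turns an optimal solution at round $i$ into a feasible solution at round $i{+}1$, giving $V_{i+1}\ge V_i-\sum_{x\in X_i}y^{*,i}_{w_x}p_x\ge V_i-p'(T_i)$. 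Telescoping yields $p'(T)=\sum_i p'(T_i)\ge V_0-V_{\mathrm{stop}}\ge OPT/2$, i.e., $p'(T)=\Omega(OPT)$. For the density, at every round before termination $V_i>OPT/2$, so each $T_i$ has density at least $\gamma^*:=\Omega(OPT/(B\sqrt{|V|}\log^2|X|))$; therefore $c'(T)\le\sum_i c'(T_i)\le\sum_i p'(T_i)/\gamma^*=p'(T)/\gamma^*$, and $p'(T)/c'(T)\ge\gamma^*$. Polynomial running time follows because $V_i\le\sum_{x\in\bar X_{i-1}}p_x\le|X|\cdot\max_x p_x\le|X|\cdot OPT$ and each round covers at least $\Omega(OPT/\log^2|X|)$ new prize.

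\textbf{Main obstacle.} The crux is the Steiner tree step: I need a \DSteinerT rounding whose output is comparable with the fractional LP optimum rather than the integer optimum, since the $O(\log^2|X|)$ loss incurred by the double pigeonhole is legitimate only relative to an LP upper bound on $OPT$. This is exactly what Theorem~\ref{thDSteinerTree} provides and is what differentiates the approach from classical directed Steiner tree rounding results, which compare only with the integer optimum. A secondary subtlety is calibrating the two layers of geometric bucketing so the pigeonhole loss stays $O(\log^2|X|)$, and arranging the iterative peeling so that the \emph{same} tree $T$ attains both $p'(T)=\Omega(OPT)$ and the claimed density simultaneously; the key enabling inequality is $V_{i+1}\ge V_i-p'(T_i)$, which lets the two guarantees be extracted from a single telescoping argument.
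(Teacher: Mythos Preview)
Your overall architecture---bucket an LP solution, feed a bucket to the \DSteinerT rounding of Theorem~\ref{thDSteinerTree}, then iterate---is sound, and the iterative peeling with the telescoping inequality $V_{i+1}\ge V_i-p'(T_i)$ is a legitimate alternative to the paper's one-shot argument. The paper instead buckets only by the $y$-value, splits the buckets into a ``large-$y$'' part ($y_w\ge 1/\log|X|$) and a ``small-$y$'' part, and shows that in either case a \emph{single} application of Lemma~\ref{lem:cost} already returns a tree with $p'(T)=\Omega(OPT)$ and the claimed density; no iteration is needed. Your peeling trades that case split for a loop, which is perfectly fine.

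There is, however, a genuine quantitative gap. With the \emph{double} pigeonhole (by $p_x$ and by $y^*_{w_x}$) you lose $\Theta(\log^2|X|)$ in LP value, and the Steiner rounding contributes another $\log|R|$ factor. Concretely, from $|R|\,q^*p^*=\Omega(V/\log^2|X|)$ you get prize $|R|\,p^*=\Omega(V/(q^*\log^2|X|))$ and cost $O((B/q^*)\sqrt{|V|}\log|R|)$, hence density $\Omega\bigl(V/(B\sqrt{|V|}\log^2|X|\cdot\log|R|)\bigr)=\Omega\bigl(V/(B\sqrt{|V|}\log^3|X|)\bigr)$, one logarithm short of the theorem. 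Your attempted fix---``absorbing $\log|V|$ into $\log|X|$ because $|V|$ is polynomial in $|X|$''---does not hold: nothing in the setup bounds $|V|$ by a polynomial in $|X|$. (And it is beside the point anyway: the Steiner factor in Theorem~\ref{thDSteinerTree} is $\sqrt{|V''\setminus R|}\log|R|$, which here equals $\sqrt{|V|}\log|R|\le\sqrt{|V|}\log|X|$ regardless of the relation between $|V|$ and $|X|$, so it is already a $\log|X|$ and cannot absorb a second one.)

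The repair is easy: drop the bucketing by $p_x$ entirely. Bucketing only by $y^*_{w_x}$ into $O(\log|X|)$ classes yields a bucket $R$ with $y^*_w\in[q^*,2q^*]$ and $\sum_{w\in R}y^*_wp_w=\Omega(V/\log|X|)$; then, since $y^*_w\le 2q^*$, one has $\sum_{w\in R}p_w\ge(2q^*)^{-1}\sum_{w\in R}y^*_wp_w=\Omega(V/(q^*\log|X|))$ directly, with no need for prize buckets. This gives density $\Omega(V/(B\sqrt{|V|}\log^2|X|))$ and prize $\Omega(V/\log|X|)$ per round, after which your iterative peeling completes the proof.
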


To prove the theorem, we start by introducing a polynomial time algorithm that computes an out-tree spanning a given set of nodes, called terminals. The cost of this out-tree is bounded by a function of a lower bound on the amount of flow received by each terminal in an optimal solution for~\eqref{lpDRCC}. Formally, we prove the next lemma. The algorithm in Theorem~\ref{th:ratiotree}, carefully chooses suitable terminal sets that guarantee a lower bound on the obtained prize and on the received flow.

\begin{lemma}\label{lem:cost}
 Let $\{y_v,f^v_P\}_{v\in V', P\in \mathcal{P}_v}$ be an optimal solution for linear program~\eqref{lpDRCC},  $\delta \geq 1$ be a real number, and $R\subseteq W$ be a set of nodes such that $y_w\geq 1/\delta$, for each $w\in R$. Then there exists a polynomial time algorithm that computes an out-tree $T$ of $G'$ rooted at $r$ that spans all the nodes in $R$ and costs $c'(T)=O(\delta B\sqrt{|V|}\log |R|)$.
\end{lemma}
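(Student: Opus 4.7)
The plan is to reduce the problem to a node-weighted \DSteinerT instance on $G'$ with root $r$ and terminal set $R$, and apply the approximation algorithm of Theorem~\ref{thDSteinerTree}. The key step is to convert the given solution $\{y_v, f^v_P\}$ of~\eqref{lpDRCC} into a feasible fractional solution of the standard flow-based LP relaxation of this \DSteinerT instance, whose cost is at most $\delta B$. Invoking Theorem~\ref{thDSteinerTree} on this LP-fractional solution then produces an integer out-tree $T$ spanning $R$ whose cost is $O(\sqrt{|V|}\log|R|) \cdot \delta B$, giving the claimed bound.

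To construct the fractional \DSteinerT solution, set $\tilde{y}_v := \min\{\delta\, y_v,\, 1\}$ for every $v \in V'$, and for every terminal $w \in R$, renormalize the flow paths by $\tilde{f}^w_P := f^w_P / y_w$. Since $y_w \geq 1/\delta$ for $w \in R$, this quantity is well defined, and~\eqref{lpDRCC:overrallflow} gives $\sum_{P \in \mathcal{P}_w} \tilde{f}^w_P = 1$, providing the required unit of flow from $r$ to each terminal. For the capacity constraints, using~\eqref{lpDRCC:capacity} and $y_w \geq 1/\delta$, for any $z \in V' \setminus \{r\}$ we have
\[
\sum_{P \in \mathcal{P}_w \,:\, z \in P} \tilde{f}^w_P \;\leq\; \frac{y_z}{y_w} \;\leq\; \delta\, y_z,
\]
which, combined with the trivial bound $\sum_{P:\, z \in P} \tilde{f}^w_P \leq \sum_{P} \tilde{f}^w_P = 1$, yields $\sum_{P:\, z \in P} \tilde{f}^w_P \leq \min\{\delta\, y_z,\, 1\} = \tilde{y}_z$, so feasibility holds.

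The cost of the fractional \DSteinerT solution then satisfies $\sum_{v \in V'} c_v\, \tilde{y}_v \leq \delta \sum_{v \in V'} c_v\, y_v \leq \delta B$ by~\eqref{lpDRCC:budget}. Applying the algorithm of Theorem~\ref{thDSteinerTree} to this instance, we obtain in polynomial time an out-tree $T$ of $G'$ rooted at $r$ that spans every node of $R$ and whose cost is at most $O(\sqrt{|V|}\log|R|)$ times the LP optimum, i.e.\ $O(\delta B \sqrt{|V|}\log|R|)$, as required.

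The main technical point is verifying feasibility of the scale-and-cap construction: the cap at $1$ is needed so that $\tilde{y}_v$ is a valid node-capacity variable for the DST LP, but it must still dominate every per-terminal flow through the node; this holds precisely because each renormalized flow sums to $1$, so trivially no path carries more than $1$ through any node. A secondary subtlety is that the guarantee of Theorem~\ref{thDSteinerTree} is phrased with an $O(\sqrt{|V|}\log|V|)$ factor, and here we use the sharper $O(\sqrt{|V|}\log|R|)$ dependence afforded by that algorithm on instances whose non-terminal nodes in $W$ have zero cost and hence contribute nothing to the rounding loss.
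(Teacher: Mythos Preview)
Your approach is essentially the same as the paper's: scale the flow to each terminal $w\in R$ by $1/y_w$, observe that the induced node capacities are at most $\delta y_v$, conclude the fractional \DSteinerT optimum is at most $\delta B$, and invoke Theorem~\ref{thDSteinerTree}. The paper sets $x_v=\max_{t\in R}\sum_{P\ni v}g^t_P$ whereas you use the equivalent $\tilde y_v=\min\{\delta y_v,1\}$; both yield the same cost bound.

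There are, however, two gaps in your write-up. First, Theorem~\ref{thDSteinerTree} guarantees cost $O\big((OPT_{FDST}+F)\sqrt{|V\setminus R|}\log|R|\big)$, not $O(\sqrt{|V|}\log|R|)$ times the LP optimum alone. You drop the $F=\max_v dist(r,v)$ term without comment. The paper handles this by recalling that $G$ was made $B$-proper at the outset, so $F\le B\le \delta B$ and $OPT_{FDST}+F=O(\delta B)$; you need to say this.

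Second, your justification for replacing $\sqrt{|V'\setminus R|}$ by $\sqrt{|V|}$ is incorrect. Having zero cost does not, by itself, remove a node from the $|V\setminus R|$ count in Theorem~\ref{thDSteinerTree}; that factor arises from a threshold and a hitting-set size bound that depend only on cardinality, not on costs. The correct reason is structural: in $G'$ every node of $W$ has only incoming arcs, so no vertex of $W\setminus R$ can lie on a simple $r$-to-terminal path. Hence one may (and the paper does) restrict the \DSteinerT instance to the induced subgraph $G''=G'[V\cup R]$, for which $|V''\setminus R|=|V|$ and Theorem~\ref{thDSteinerTree} gives the desired $O(\sqrt{|V|}\log|R|)$ factor directly.
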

\begin{proof}
    The proof is summarized as follows. We consider the set of nodes in $R$ as the set of terminals in an instance of the node-weighted Directed Steiner tree problem (\DSteinerT) where $r$ is the root node. By using solution $\{y_v,f^v_P\}_{v\in V', P\in \mathcal{P}_v}$ and the lower bound on the amount of flow received by each node in $R$, we show that the optimum for a fractional relaxation of this instance of \DSteinerT is at most $\delta B$. Then, we apply the approximation algorithm for \DSteinerT that we will give in Section~\ref{sec:DSteinerT}, which computes a tree whose cost is a factor $O(\sqrt{|V|}\log |R|)$ from the optimum of the same fractional relaxation. Therefore, we obtain an out-tree that is rooted at $r$, spans all nodes in $R$, and costs $O(\delta B\sqrt{|V|}\log |R|)$, proving the theorem.

    We now give the details of the proof. We first introduce the notation for problem \DSteinerT and its linear relaxation.
    In \DSteinerT, we are given a directed graph $G''=(V'', A'')$ with nonnegative costs assigned to its nodes and a set of terminals $R \subseteq V''$, and the goal is to find an out-tree of $G''$ rooted at the given root node spanning $R$ such that the total cost on its nodes is minimum. We consider the standard flow-based linear programming relaxation of \DSteinerT (called \FDSteinerT) in which we need to assign capacities to nodes in such a way that the total flow sent from the root node to any terminal is $1$ and the sum of node capacities multiplied by their cost is minimized. Formally, given a directed graph $G''=(V'', A'')$, a root node $r\in V''$, a nonnegative node-cost function $c'': V'' \rightarrow \mathbb{R}^{\ge 0}$, and a set of terminals $R \subseteq V''$, \FDSteinerT requires to solve the following linear program.
    \begin{align}\label{lpDSteinerTree}
 \text{minimize}\quad \textstyle\sum_{v \in V''} x_v c_v & \tag{LP-DST}\\
  \text{subject to}\quad\textstyle\sum_{P \in \mathcal{P}_t}g^t_P &= 1, &&\forall t \in R\label{lpDSteinerTree:overallflow}\\
   \textstyle\sum_{P \in \mathcal{P}_t:v \in P} g^t_{P}&\le x_v, &&\forall v \in V'', t \in R\label{lpDSteinerTree:capacity}\\
  \textstyle  0 \le &x_v \le 1, &&\forall v\in V''\notag\\
   \textstyle 0 \le &g^t_P \le 1, &&\forall t\in R, P \in \mathcal{P}_t,\notag
\end{align}
where $\mathcal{P}_t$ is the set of all simple paths from $r$ to $t$ in $G''$, for each $t\in R$, and $c_v = c''(v)$, for each $v\in V''$.
Similarly to~\eqref{lpDRCC}, we use variables $x_v$ and $g^t_P$ as capacity and flow variables, respectively, for each $v\in V''$, $t\in R$, and $P\in \mathcal{P}_t$. As for~\eqref{lpDRCC}, Constraints~\eqref{lpDSteinerTree:overallflow} and~\eqref{lpDSteinerTree:capacity} ensure connectivity, but, differently from~\eqref{lpDRCC}, we require that all terminals receive an amount of flow from $r$ equal to 1, while the other nodes do not need to receive a predefined amount of flow.

    From $G'=(V', A')$ and $R$, we define an instance $I_{DST}$ of \DSteinerT as follows. We create a directed graph $G''=(V'', A'')$ as the subgraph of $G'$ induced by $V''=V \cup R$. The set of terminals in $I_{DST}$ is $R$, the root node is $r$ and the node costs are defined  as $c'$, i.e., $c''(v)=c'(v)$, for each $v\in V''$. 
    Let $I_{FDST}$ be the instance of \FDSteinerT induced by $I_{DST}$ as in~\eqref{lpDSteinerTree} and let $OPT_{FDST}$ be the optimum for $I_{FDST}$.
    
    We now argue that the optimum $OPT_{FDST}$ for $I_{FDST}$ is at most $\delta B$. Starting from the solution $\{y_v,f^v_P\}_{v\in V', P\in \mathcal{P}_v}$ for~\eqref{lpDRCC}, we define a solution $\{x_v,g^t_P\}_{v\in V'', t\in R, P\in \mathcal{P}_t}$ for~\eqref{lpDSteinerTree} as follows: $x_t=1$, for each $t\in R$;  $g^t_P = f^t_P / y_t$, for each $t\in R$ and $P\in \mathcal{P}_t$; and $x_v=\max_{t\in R}\{\sum_{P \in \mathcal{P}_t:v \in P} g^t_{P}\}$, for each $v\in V''\setminus R$. We show that the defined solution is feasible for~\eqref{lpDSteinerTree} and its cost is at most $\delta B$, which implies that $OPT_{FDST}\leq \delta B$. Constraint~\eqref{lpDSteinerTree:overallflow} is satisfied as, by Constraint~\eqref{lpDRCC:overrallflow} of~\eqref{lpDRCC}, we have that, for each $t\in R$, $\sum_{P \in \mathcal{P}_t}f^t_P = y_t$ and hence $\sum_{P \in \mathcal{P}_t}g^t_P = \sum_{P \in \mathcal{P}_t}f^t_P/y_t = 1$. Constraint~\eqref{lpDSteinerTree:capacity} is satisfied, as by definition of $x_v$, it holds $x_v\geq\sum_{P \in \mathcal{P}_t:v \in P} g^t_{P}$, for each $v\in V''$ and $t\in R$. The last two constraints are satisfied by definition of $\{x_v,g^t_P\}_{v\in V'', t\in R, P\in \mathcal{P}_t}$ and by Constraint~\eqref{lpDSteinerTree:overallflow}. The cost of $\{x_v\}_{v\in V''}$ is equal to $\sum_{v \in V''} x_v c_v$. For each $v\in V''\setminus R$, let $t_v$ be the terminal that attains the maximum in the definition of $x_v$, i.e., $t_v:=\arg\max_{t\in R}\{\sum_{P \in \mathcal{P}_t:v \in P} g^t_{P}\}$,  then
    \[
x_v = \sum_{P \in \mathcal{P}_{t_v}:v \in P} g^{t_v}_{P} = \sum_{P \in \mathcal{P}_{t_v}:v \in P} f^{t_v}_{P}/y_{t_v} \leq y_{v}/y_{t_v} \leq \delta y_{v},
    \]
    where the first inequality is due to Constraint~\eqref{lpDRCC:capacity} of~\eqref{lpDRCC} and the last inequality is due to $y_t\geq 1/\delta$ for each node $t\in R$. Moreover, $c_t=0$ for each $t\in R$, because $R\subseteq W$. It follows that $\sum_{v \in V''} x_v c_v = \sum_{v \in V''\setminus R} x_v c_v \leq \delta\sum_{v \in V''\setminus R} y_v c_v  \leq \delta \sum_{v \in V'\setminus R} y_v c_v \leq \delta B$, by Constraint~\eqref{lpDRCC:budget} of~\eqref{lpDRCC}.

    Finally, we apply the algorithm in Section~\ref{sec:DSteinerT}. This algorithm is a polynomial time $O(\sqrt{|V''\setminus R|}\log |R|)$-approximation algorithm for \DSteinerT that, starting from an optimal solution to~\eqref{lpDSteinerTree}, computes a tree $T_{DST}$ rooted at $r$ spanning all the terminals. Moreover, the cost of $T_{DST}$ is at most a factor $O(\sqrt{|V''\setminus R|}\log |R|)$ from the fractional optimum $OPT_{FDST}$, that is 
    \[
    c''(T_{DST})=\sum_{v\in V(T_{DST})}c''(v)=O(\sqrt{|V''\setminus R|}\log |R| )OPT_{FDST},
    \]
    see Theorem~\ref{thDSteinerTree}.\footnote{Here we ignore the term $F=\max_{v\in V}dist(r,v)$  because $G$ is $B$-proper and hence $F\leq B$.}
By applying this algorithm to our instance $I_{DST}$ of \DSteinerT,  we obtain a tree $T_{DST}$ that is rooted at $r$ and spans all the nodes in $R$. The costs of $T_{DST}$ is
    \begin{align*}
    c''(T_{DST}) &=O(\sqrt{|V''\setminus R|}\log |R|)OPT_{FDST} \\
    &= O(\sqrt{|V|}\log |R|)OPT_{FDST} \\
    &= O(\delta B\sqrt{|V|}\log |R|),
    \end{align*}
    as $V''\setminus R=V$ and $OPT_{FDST}\leq \delta B$. This concludes the proof.
    \end{proof}

We now prove Theorem~\ref{th:ratiotree}.
\begin{proof}[Proof of  Theorem~\ref{th:ratiotree}]
We first compute an optimal solution $\{y_v,f^v_P\}_{v\in V', P\in \mathcal{P}_v}$  for the Linear Program~\eqref{lpDRCC}. 
Let $Z \subseteq W$ be the set of nodes in $W$ that in solution $\{y_v,f^v_P\}_{v\in V', P\in \mathcal{P}_v}$ receive at least $\frac{1}{|X|^2}$ amount of flow from $r$, i.e., for any $w \in Z, y_{w}\ge \frac{1}{|X|^2}$. The overall prize accrued by all nodes in $Z$ is
\begin{align*}
\sum_{w \in Z} p_w &\geq \sum_{w \in Z} y_w p_w = OPT-\sum_{w \in W\setminus Z} y_w p_w\\&\ge \left(1-\sum_{w \in W\setminus Z} y_w\right)OPT 
\ge \left(1-|X|\cdot \frac{1}{|X|^2}\right)OPT\\
&=\left(1-\frac{1}{|X|}\right)OPT,\label{EqDirectedPrizeofSelecctedSet}
\end{align*}
where the first inequality holds as $y_w \le 1$, for each $w \in Z$, the second inequality holds as the prize of each node is no more than $OPT$ and the third inequality holds because each node $w \in W\setminus Z$ has $y_w < \frac{1}{|X|^2}$ and $|W\setminus Z| \leq |W| = |X|$.

From now on we only consider the prize accrued by nodes in $Z$, which results in losing a factor of at most $1-\frac{1}{|X|} = \Theta(1)$ with respect to the optimum of~\eqref{lpDRCC}.
To simplify the reading, we ignore this constant factor and assume that $\sum_{w \in Z} y_w p_w = OPT$.

We partition the nodes of $Z$ into $k$ disjoint sets $Z_1, \dots, Z_k$ defined as $Z_i=\left\{w\in Z : y_w\in \Big(\frac{1}{2^i},\frac{1}{2^{i-1}}\Big]\right\}$,  for each $i \in [k]$. It is easy to see that $k=O(\log |X|)$ such sets are enough to cover all nodes of $Z$. In fact, if the smallest value of $y_w$ for a node $w\in Z$ is in the interval $\Big(\frac{1}{2^k},\frac{1}{2^{k-1}}\Big]$, then, since $y_w\geq \frac{1}{|X|^2}$, we have $\frac{1}{2^{k-1}}\ge \frac{1}{|X|^2}$, and hence $2^{k-1} \le |X|^2$ and $k\le 2\log |X|+1$.

We distinguish between two cases by dividing $Z$ into two parts $Z_A=\bigcup_{i=1}^{\lfloor\log\log {|X|}\rfloor} Z_i$ and $Z_B = Z\setminus Z_A=\bigcup_{i=\lfloor\log\log {|X|}\rfloor+1}^{k} Z_i$. Since $\sum_{w \in Z} y_w p_w = OPT$, we must have $\sum_{w \in Z_A}y_w p_w \ge \frac{OPT}{2}$ or $\sum_{w \in Z_B}y_w p_w \geq \frac{OPT}{2}$.
\begin{enumerate}
    \item $\sum_{w \in Z_A}y_w p_w \ge \frac{OPT}{2}$. In this case, we consider the set of nodes in $Z_A$ as the set of terminals $R$ in Lemma~\ref{lem:cost}. Since $y_w\geq 1/2^{\lfloor\log\log {|X|}\rfloor} \geq 1/2^{\log\log{|X|}} = 1/\log{|X|}$, for each $w\in Z_A$, in Lemma~\ref{lem:cost} we can set $\delta=\log{|X|}$. Therefore, by applying the algorithm in Lemma~\ref{lem:cost}, we obtain a tree $T$ rooted at $r$ that spans all the nodes in $Z_A$ and costs $c'(T)= O(B\sqrt{|V|}\log^2 |X|)$. Moreover, as $T$ spans all the nodes in $Z_A$, its prize is at least 
    \[
    \quad\quad\quad p'(T) =\sum_{v\in V(T)}p'(v) \geq \sum_{w \in Z_A}p_w  \geq \sum_{w \in Z_A}y_w p_w \ge \frac{OPT}{2},
    \]
    by the case assumption and monotonicity of the prize function. Therefore, the ratio between prize and cost of $T$ is $\frac{p'(T)}{c'(T)} = \Omega\left(\frac{OPT}{B\sqrt{|V|}\log^2{|X|}}\right)$.

    \item $\sum_{w \in Z_B}y_w p_w \geq \frac{OPT}{2}$. Since $k\leq 2\log{|X|} +1$, there must be an index $i$ between $\lfloor\log\log {|X|}\rfloor+1$ and $2\log{|X|} +1$ such that 
    \[
    \quad\sum_{w \in Z_i} y_w p_w \ge \frac{OPT/2}{2\log{|X|} - \lfloor\log\log {|X|}\rfloor + 1} \ge \frac{OPT}{4\log{|X|}},
    \]
    for $|X|$ sufficiently large.
    Let $p'(Z_i)$ be the sum of prizes of all the nodes in $Z_i$. Then,
    \begin{equation}\label{th:ratiotree:prize2ndcaseeqone}
    \quad\quad p'(Z_i)=\sum_{w \in Z_i} p_w \ge 2^{i-1} \sum_{w \in Z_i} y_w p_w \ge 2^{i-1}\frac{OPT}{4\log{|X|}},    
    \end{equation}
    since $y_w\in \Big(\frac{1}{2^i},\frac{1}{2^{i-1}}\Big]$, for each $w \in Z_i$.
    Moreover, since $i \geq \lfloor\log\log {|X|}\rfloor+1\geq \log\log|X|$, then
    \begin{align}
    p'(Z_i) &\ge 2^{i-1}\frac{OPT}{4\log{|X|}} \geq 2^{\log\log {|X|}-1}\frac{OPT}{4\log{|X|}}\nonumber \\
    & = \frac{\log{|X|}}{2}\frac{OPT}{4\log{|X|}} = \Omega(OPT).\label{th:ratiotree:prize2ndcaseeqtwo}
    \end{align}    
    Similarly to the previous case, we apply the algorithm in Lemma~\ref{lem:cost}, considering $Z_i$ as set of terminals and $\delta = 2^i$, since $y_w \geq 1/2^i$, for each $w\in Z_i$. The tree $T$ computed by the algorithm in the lemma has cost $c'(T)=O(2^i B \sqrt{|V|}\log{|X|})$ and, since it spans all the nodes in $Z_i$, has prize $p'(T)\geq p'(Z_i)\geq 2^{i-1}\frac{OPT}{4\log{|X|}}$, by Inequality~\eqref{th:ratiotree:prize2ndcaseeqone}. Therefore, the prize-to-cost ratio of $T$ is $\frac{p'(T)}{c'(T)} = \Omega\left(\frac{OPT}{B\sqrt{|V|}\log^2{|X|}}\right)$. Moreover, by Inequality~\eqref{th:ratiotree:prize2ndcaseeqtwo},
    $p'(T) \geq p'(Z_i) = \Omega (OPT)$.\qedhere 
\end{enumerate}
\end{proof}

\subsection*{Trimming Process} In the previous step, we computed an out-tree $T$ of $G'$ rooted at $r$ whose prize is $\Omega(OPT)$. If the cost of $T$ satisfies the budget constraint, this gives a constant approximation factor. However, the cost of $T$ can exceed the budget $B$. In this case, we can exploit the fact that the ratio between prize and cost of $T$ is bounded by $\gamma=\frac{p'(T)}{c'(T)}=\Omega\left(\frac{OPT}{B \sqrt{|V|}\log^2{|X|}}\right)$. In fact, this property allows us to use the trimming process introduced in the following lemma by Bateni et al.~\cite{bateni2018improved} for the node-weighted budgeted problem in undirected graphs.
\begin{lemma}[Lemma 3 in \cite{bateni2018improved}]\label{lmBateniTrimmingProcess}
Let $T$ be a tree rooted at $r$ with  prize-to-cost ratio $\gamma=\frac{p(T)}{c(T)}$. Suppose the underlying graph is $B$-proper for $r$ and for $\epsilon \in (0, 1]$ the cost of the tree is at least $\frac{\epsilon B}{2}$. One can find a tree $\hat T$ containing $r$ with prize-to-cost ratio at least $\frac{\epsilon \gamma}{4}$ such that $\epsilon B/2 \le c(\hat T) \le (1+\epsilon)B$.
\end{lemma}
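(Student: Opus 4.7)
The plan is to split on whether $T$ already satisfies the upper cost bound. If $c(T) \leq (1+\epsilon)B$, simply take $\hat T = T$: the hypothesis gives $c(T) \geq \epsilon B/2$, and the prize-to-cost ratio is $\gamma \geq \epsilon\gamma/4$, so the conclusion holds trivially. Hence we may assume $c(T) > (1+\epsilon)B$, and the real work is to produce a trimmed tree $\hat T$ containing $r$ with $c(\hat T) \in [\epsilon B/2,(1+\epsilon)B]$ and $p(\hat T)/c(\hat T)\geq \epsilon\gamma/4$.

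Root $T$ at $r$ and iteratively prune. At each step, search among the proper subtrees of the current tree (those rooted at non-root nodes) for one whose cost is at most $\epsilon B/2$, selecting one of minimum prize-to-cost ratio, and remove it. Stop when the cost has dropped to at most $(1+\epsilon)B$. A single removal cannot drop the total cost below $(1+\epsilon)B-\epsilon B/2=(1+\epsilon/2)B$, which is already well above $\epsilon B/2$, so the stopping tree has cost in the target range. For the ratio, a standard averaging argument shows that removing a subtree whose ratio is at most the current global ratio can only increase the global ratio. Since we always remove a minimum-ratio candidate, every removed subtree has ratio at most the current global ratio, which in turn is at least the initial ratio $\gamma$. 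Hence the pruned tree has ratio at least $\gamma\geq \epsilon\gamma/4$.

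The main obstacle is the corner case in which the pruning procedure gets stuck because no subtree of cost at most $\epsilon B/2$ exists, i.e., every leaf in the current tree has individual cost exceeding $\epsilon B/2$. In that situation one must leave the subtree-of-$T$ framework and exploit the $B$-properness of the underlying graph. The idea is to pick a suitable node $v$ of the current tree and build a replacement candidate $\hat T_v = P_v \cup T_v$, where $P_v$ is a shortest $r$-$v$ path in the underlying graph; $P_v$ has cost at most $B$ by $B$-properness, and $T_v$ can be chosen with cost at most $\epsilon B/2$, giving total cost in $[\epsilon B/2,(1+\epsilon)B]$.

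The subtle point is choosing $v$ so that $\hat T_v$ still inherits a prize-to-cost ratio of at least $\epsilon\gamma/4$. One picks $v$ to maximize $p(T_v)/c(T_v)$ (or, equivalently, picks the best among $O(n)$ candidates) and uses an averaging argument over the decomposition of $T$ into pieces of cost $\leq \epsilon B/2$ to show that the best such piece has ratio at least $\gamma/2$; combining this piece with the path $P_v$, whose cost is bounded by $B$ versus at least $\epsilon B/2$ on the piece, loses a further factor of $\epsilon/2$, yielding the final $\epsilon\gamma/4$ bound. This last combinatorial step — simultaneously landing in the cost window and preserving a constant fraction of the prize-to-cost ratio after appending a long path — is the technically hardest part of the argument.
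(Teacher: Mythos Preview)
The paper does not prove this lemma; it is quoted verbatim from Bateni, Hajiaghayi and Liaghat and only used as a black box. So there is no ``paper's own proof'' to compare against, and your proposal must be judged on its own.

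Your pruning argument has a genuine gap. You assert that the minimum-ratio subtree among those of cost at most $\epsilon B/2$ has ratio at most the current global ratio, but this does not follow: being the minimum over a \emph{restricted} family says nothing about how that minimum compares to the global average. Concretely, take the tree $T$ to be a path $r,v_1,\dots,v_k,\ell$ with $c(r)=0$, $c(v_i)=c(\ell)=1$, $p(v_i)=0$, $p(\ell)=P$ large, and suppose the underlying graph has a shortcut edge $(r,\ell)$ so it is $B$-proper for $B=2$. With $\epsilon=1$ and $k=10$, the only proper subtree of cost $\le \epsilon B/2=1$ is $\{\ell\}$, whose ratio is $P$, far above the global ratio $P/11$. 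Your procedure removes $\ell$, after which the tree has ratio $0$; further pruning cannot recover any prize, and you terminate with a tree of ratio $0$ instead of the required $\epsilon\gamma/4=P/44$. The ``corner case'' you describe (every leaf has cost $>\epsilon B/2$) is never triggered here, so your fallback never fires.

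The repair is essentially to \emph{always} use the mechanism you reserved for the corner case. The actual argument in Bateni et al.\ does not iteratively delete low-ratio pieces; it directly locates a high-ratio piece to \emph{keep}. One finds a node $v$ deepest in $T$ with $c(T_v)>\epsilon B$, so every child subtree of $v$ has cost $\le \epsilon B$; one then selects a subset of these child subtrees (greedily by ratio) whose total cost lies in a controlled window and whose ratio is at least a constant fraction of $\gamma$ by averaging, and attaches this subset to $r$ via a shortest path of cost $\le B$. Your final paragraph gestures at exactly this decomposition-and-average idea; the fix is to make it the main argument rather than a fallback, and to drop the unsupported iterative pruning.
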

Note that the above lemma has been introduced for (undirected) rooted trees, but it is easy to see that it can be extended to rooted out-trees, see e.g.~\cite{d2022approximation}.
If $c'(T)>B$, we apply to $T$ the trimming process of Lemma~\ref{lmBateniTrimmingProcess} and obtain another out-tree $\hat T$ of $G'$ with cost between $\frac{\epsilon B}{2}$ and $(1+\epsilon)B$ and prize-to-cost ratio $\frac{p'(\hat T)}{c'(\hat T)} \geq \frac{\epsilon \gamma}{4}$, for any $\epsilon\in (0,1]$. 
Tree $\hat T$ violates the budget at most by a factor $1+\epsilon$. Moreover, the prize of $\hat T$ is $p'(\hat T)\geq \frac{\epsilon \gamma}{4}c'(\hat T)=\Omega\left(\frac{\epsilon OPT}{B \sqrt{|V|}\log^2{|X|}} c'(\hat T)\right)$. Since $c'(\hat T) \ge \epsilon B/2$ and $OPT\geq p(T^*_B)$,  then $p'(\hat T)=\Omega\left( \frac{\epsilon^2  p(T^*_B)}{\sqrt{|V|}\log^2{|X|}}\right)$. 

It remains to turn the tree $\hat T$ of $G'$ into a tree of $G$ with the same prize and cost by taking the maximal subtree of $\hat T$ containing only nodes in $V$.
This results in the following theorem.

\begin{theorem}\label{thMainDBudget}
For any $\epsilon \in (0, 1]$, problem \DRCC admits a polynomial time bicriteria $\left(1+\epsilon, O\left(\frac{\sqrt{|V|}\log^2{|X|}}{\epsilon^2}\right)\right)$-approximation algorithm.
\end{theorem}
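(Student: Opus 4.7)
The plan is to follow the three-step outline already laid out and combine the machinery developed earlier in the section. First, I would preprocess the input by removing every node $v$ with $\mathit{dist}(r,v)>B$, which makes $G$ a $B$-proper graph for $r$ without discarding any feasible solution of value competitive with $p(T^*_B)$. Then I would solve the linear program~\eqref{lpDRCC} in polynomial time; by Lemma~\ref{lmDirectedLPOtimpality} its optimum $OPT$ satisfies $OPT \ge p(T^*_B)$.

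Next I would invoke Theorem~\ref{th:ratiotree} on this LP optimum to obtain, in polynomial time, an out-tree $T$ of $G'$ rooted at $r$ with $p'(T)=\Omega(OPT)$ and prize-to-cost ratio
\[
\gamma = \frac{p'(T)}{c'(T)} = \Omega\!\left(\frac{OPT}{B\sqrt{|V|}\log^2|X|}\right).
\]
If $c'(T) \le B$, then $T$ (after contracting the auxiliary nodes in $W$, see below) already yields a constant-factor approximation within budget, and we are done. Otherwise $c'(T) > B \ge \epsilon B/2$, so the hypotheses of Lemma~\ref{lmBateniTrimmingProcess} apply (since $G'$ inherits $B$-properness from $G$, as the auxiliary $w_x$ nodes have cost $0$). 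Applying the trimming process yields an out-tree $\hat T$ of $G'$ containing $r$ with
\[
\tfrac{\epsilon B}{2} \le c'(\hat T) \le (1+\epsilon)B \qquad\text{and}\qquad \frac{p'(\hat T)}{c'(\hat T)} \ge \frac{\epsilon\gamma}{4}.
\]
Combining these two inequalities with the bound on $\gamma$ and with $OPT \ge p(T^*_B)$ gives
\[
p'(\hat T) \;\ge\; \frac{\epsilon \gamma}{4}\cdot c'(\hat T) \;\ge\; \frac{\epsilon\gamma}{4}\cdot\frac{\epsilon B}{2} \;=\; \Omega\!\left(\frac{\epsilon^2\, p(T^*_B)}{\sqrt{|V|}\log^2|X|}\right),
\]
which is exactly the claimed approximation ratio, with budget violation at most $1+\epsilon$.

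The final step is to turn the out-tree $\hat T$ of the auxiliary graph $G'$ back into an out-tree of the original graph $G$ without losing any prize. Since every auxiliary node $w_x\in W$ is a leaf in $G'$ (it has only incoming arcs from nodes of $V$) and has cost $0$, I would simply take $\tilde T := \hat T[V(\hat T)\cap V]$, the restriction of $\hat T$ to the original nodes. This $\tilde T$ is still an out-tree of $G$ rooted at $r$, has $c(\tilde T)=c'(\hat T)\le (1+\epsilon)B$, and every element $x$ with $w_x\in V(\hat T)$ must be covered by some $v\in V(\tilde T)$ with $x\in S_v$ (otherwise the arc $(v,w_x)$ could not belong to $\hat T$); hence $p(\tilde T) \ge p'(\hat T)$.

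The main obstacle is really Theorem~\ref{th:ratiotree}, whose proof (already given above) is where the $\sqrt{|V|}\log^2|X|$ factor is paid—by bucketing the LP capacities $y_w$ into $O(\log|X|)$ geometric classes, picking the bucket or group of buckets carrying $\Omega(OPT)$ prize, and invoking the \FDSteinerT rounding of Lemma~\ref{lem:cost} with the appropriate scaling parameter $\delta$. Given that result, assembling the pieces for Theorem~\ref{thMainDBudget} is routine: the trimming lemma is black-boxed, the LP gives the upper bound, and the $G'\to G$ conversion is immediate because of how $G'$ was constructed.
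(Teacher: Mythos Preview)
Your proposal is correct and follows essentially the same approach as the paper: preprocess to make $G$ $B$-proper, upper-bound the optimum via Lemma~\ref{lmDirectedLPOtimpality}, extract a good prize-to-cost tree via Theorem~\ref{th:ratiotree}, trim with Lemma~\ref{lmBateniTrimmingProcess} when the cost exceeds $B$, and drop the auxiliary $W$-leaves to return to $G$. Your write-up is in fact slightly more explicit than the paper's in justifying the $G'\to G$ conversion and the inheritance of $B$-properness by $G'$.
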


The following corollary follows since we can reduce any instance of the directed Budgeted Node-weighted Steiner problem (\DBOM) to an instance of \DRCC where each node of the graph is associated with a distinct singleton set and hence $|X|=|V|$.
\begin{corollary}\label{cor:MainDBudget}
For any $\epsilon \in (0, 1]$, problem \DBOM admits a polynomial time bicriteria $\left(1+\epsilon, O\left(\frac{\sqrt{|V|}\log^2{|V|}}{\epsilon^2}\right)\right)$-approximation algorithm.
\end{corollary}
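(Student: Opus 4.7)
The plan is to exhibit a straightforward reduction from \DBOM to \DRCC that preserves the graph, costs, budget, root, and (essentially) the prize function, and then invoke Theorem~\ref{thMainDBudget} with the observation that the reduction forces $|X|=|V|$.

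More concretely, given a \DBOM instance consisting of a directed graph $G=(V,A)$, root $r\in V$, node cost function $c:V\to\mathbb{R}^{\geq 0}$, node prize function $\tilde p:V\to\mathbb{R}^{\geq 0}$, and budget $B$, I would build a \DRCC instance $I'=\langle X,\mathcal{S},G,c,p,r,B\rangle$ by setting $X:=V$, $\mathcal{S}:=\{\{v\}:v\in V\}$, $S_v:=\{v\}$ for each $v\in V$, and $p(v):=\tilde p(v)$ for each $v\in X$. The underlying graph, cost function, root, and budget remain identical. This construction is obviously polynomial.

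Next I would verify the prize-preservation. For any out-tree $T$ of $G$ rooted at $r$, in the constructed \DRCC instance we have $X_T=\bigcup_{v\in V(T)}S_v=\bigcup_{v\in V(T)}\{v\}=V(T)$, so the \DRCC prize $p(T)=\sum_{x\in X_T}p(x)=\sum_{v\in V(T)}\tilde p(v)$ coincides with the \DBOM prize of $T$. The cost function and budget constraint are literally the same in both instances, so feasible solutions and their costs are in bijective, value-preserving correspondence. In particular, the two optima agree, and any bicriteria $(1+\epsilon,\alpha)$-approximate solution to $I'$ is a bicriteria $(1+\epsilon,\alpha)$-approximate solution to the original \DBOM instance.

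Finally, I would apply Theorem~\ref{thMainDBudget} to $I'$ with the chosen $\epsilon\in(0,1]$. Since $|X|=|V|$ by construction, the theorem's guarantee $O\!\left(\sqrt{|V|}\log^2{|X|}/\epsilon^2\right)$ specializes to $O\!\left(\sqrt{|V|}\log^2{|V|}/\epsilon^2\right)$ with budget violation $1+\epsilon$, proving the corollary. There is no real obstacle here; the only thing to be careful about is that the reduction truly preserves prizes node-for-node (which it does because the sets $S_v$ are disjoint singletons, so no double counting can occur and the coverage semantics of \DRCC reduces to the additive semantics of \DBOM).
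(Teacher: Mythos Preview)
Your proposal is correct and follows exactly the paper's approach: reduce \DBOM to \DRCC by associating each node $v$ with the singleton set $S_v=\{v\}$, so that $|X|=|V|$, and then invoke Theorem~\ref{thMainDBudget}. The paper states the reduction in one sentence; your write-up simply spells out the prize-preservation details more explicitly.
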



\section{The node-weighted Steiner tree problem in directed graphs}\label{sec:DSteinerT}
In this section, we present a polynomial time approximation algorithm for \DSteinerT with approximation ratio $O(\sqrt{|V|}\log{|V|})$, where $V$ is the set of nodes in the graph. More precisely, the cost of the tree computed by our algorithm is a factor $O(\sqrt{|V\setminus R|}\log{|R|})$ far from the optimum of its standard flow-based linear programming relaxation given in~\eqref{lpDSteinerTree} plus the maximum distance from the root to a node, where $R$ is the set of terminals.
The algorithm is used as a subroutine in the previous section but might be of its own interest. Formally, we show the following theorem.
\begin{theorem}\label{thDSteinerTree}
Problem \DSteinerT admits a $O\left((1+\epsilon)\sqrt{|V\setminus R|}\log{|R|}\right)$-approximation algorithm whose running time is polynomial in the input size and in $1/\epsilon$, for any $\epsilon >0$. Moreover, the cost of the tree computed by the algorithm is $O\left((OPT+F)\sqrt{|V\setminus R|}\log{|R|}\right)$, where $OPT$ is the optimum of~\eqref{lpDSteinerTree} and $F=\max_{v\in V}dist(r,v)$.
\end{theorem}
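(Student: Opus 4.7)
The plan is a density-greedy algorithm that builds the out-tree by repeatedly adding a low-density sub-structure, analyzed against the LP~\eqref{lpDSteinerTree} rather than against the integral optimum. First, I would solve~\eqref{lpDSteinerTree} in polynomial time via its standard polynomial-size flow reformulation (one single-source single-sink node-capacitated flow per terminal, all sharing the capacities $x_v$); let $OPT$ denote its optimum. The algorithm then iterates: given the current partial tree $T$ and the uncovered terminal set $R'\subseteq R$, find a subtree $T'$ rooted at $r$ covering a nonempty $R''\subseteq R'$ that minimizes the density $c(T')/|R''|$, augment $T$ with $T'$, remove $R''$ from $R'$, and repeat until $R'=\emptyset$. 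A standard greedy set-cover analysis then bounds $c(T)$ by $O(\log|R|)$ times the worst per-iteration density $\rho^*$ that can be certified.

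To certify $\rho^* = O(\sqrt{|V\setminus R|})\cdot(OPT+F)/|R'|$, I would use ``spider'' subtrees: for each hub node $u\in V$ and each size $k$ drawn from a $(1+\epsilon)$-geometric scale, construct a spider formed by a shortest $r$-to-$u$ path concatenated with the $k$ cheapest $u$-to-terminal paths to terminals in $R'$, and take the best spider over all $(u,k)$. The key claim is that some such spider achieves the target density. I would prove this by splitting on the threshold $k_0:=\lceil\sqrt{|V\setminus R|}\rceil$: if some hub $u$ lies on the LP flow-paths of at least $k_0$ terminals of $R'$, then the $k_0$-leg spider at $u$ costs $O(OPT+F)$, since its $r$-to-$u$ prefix contributes at most $F$ and its legs' aggregate integer cost is bounded by the LP through flow decomposition; otherwise, no heavy hub exists, the LP flow is spread out, and one can show $OPT$ itself dominates $|R'|F/\sqrt{|V\setminus R|}$, so even a single-leg spider of cost $\le F$ meets the target density.

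The main obstacle is the charging argument that links the integer spider legs to the fractional LP. Because a spider pays each node's cost once while the LP capacities $x_v$ are fractionally shared across many terminal flows, bounding the $k_0$ cheapest integer $u$-to-terminal paths by an appropriate $\sum x_v c_v$ contribution will require a pigeonhole-plus-scaling step applied to the flow decomposition of the LP solution restricted to $R'$; the threshold $k_0$ is chosen precisely to balance the two sides of this argument. The $(1+\epsilon)$ factor arises from the geometric discretization of the spider size $k$ so that the enumeration stays polynomial, with running time polynomial in $1/\epsilon$ through $O(\log_{1+\epsilon}|R|)$ buckets. Finally, the additive $F$ in the bound is structural: every iteration must pay for some root-connecting prefix, and $F=\max_{v\in V}dist(r,v)$ is the universal upper bound on such a prefix once the graph is made $B$-proper.
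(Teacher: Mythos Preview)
Your approach is genuinely different from the paper's. The paper does \emph{not} iterate or use spiders here: it solves~\eqref{lpDSteinerTree} once, thresholds the fractional solution at $x_v\ge 1/\sqrt{|V\setminus R|}$ to obtain a ``heavy'' set $U$, and splits the terminals into \emph{cheap} (reachable from $r$ inside $G[U]$) and \emph{expensive}. Any spanning tree of the cheap terminals inside $G[U]$ costs at most $\sqrt{|V\setminus R|}\cdot OPT$ because $\sum_{v\in U}c_v\le\sqrt{|V\setminus R|}\sum_{v\in U}x_vc_v$. For each expensive terminal $t$, the set $X_t\subseteq V\setminus U$ of ``last light nodes'' on flow paths into $t$ must satisfy $|X_t|\ge\sqrt{|V\setminus R|}$ (a full unit of flow crosses it and each node carries $<1/\sqrt{|V\setminus R|}$), so a deterministic hitting set $X'$ of size at most $\sqrt{|V\setminus R|}\log|R|$ meets every $X_t$; shortest paths from $r$ to $X'$ cost at most $|X'|\cdot F$, and the paths from $X'$ to the expensive terminals lie in $U\cup X'$ and cost at most $\sqrt{|V\setminus R|}\cdot OPT$. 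The $(1+\epsilon)$ in the approximation ratio comes only from a binary search guessing $c(T^*)$ so that one can prune to $F\le(1+\epsilon)c(T^*)$; it has nothing to do with discretizing a spider size.

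The gap in your plan is the Case~1 density certificate. Even granting that a $k_0$-leg spider at a heavy hub costs $O(OPT+F)$, its density is $O(OPT+F)/k_0=O(OPT+F)/\sqrt{|V\setminus R|}$, whereas your target is $O(\sqrt{|V\setminus R|})(OPT+F)/|R'|$; these agree only when $|R'|=O(|V\setminus R|)$, which need not hold (take $r\to u$, $u\to v_1,\dots,v_m$ each of cost $1$, and every $v_j$ pointing to all of $t_1,\dots,t_n$ with $n\gg m$: here $OPT=F=1$, $u$ is on all $n$ flow paths, any $k_0$-leg spider at $u$ still costs $1$, so its density is $\Theta(1/\sqrt m)$ while the target is $\Theta(\sqrt m/n)$). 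A larger spider does work in this example, but your analysis pins $k=k_0$ and offers no argument for other $k$; the ``pigeonhole-plus-scaling'' you invoke would have to yield an LP-relative spider-density lemma for arbitrary directed graphs, and the paper proves such a lemma only in the bidirected-with-sink-terminals case (Section~\ref{sec:DSteinerTbidirected}), resorting to the threshold/hitting-set route precisely because no such lemma is available in general. Your Case~2 sketch (``no heavy hub implies $OPT\gtrsim |R'|F/\sqrt{|V\setminus R|}$'') is closer to salvageable via a flow-cost double-counting argument, but as written it also ignores that $r$ itself lies on every flow path and so is always a ``heavy hub''.
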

We prove Theorem~\ref{thDSteinerTree} in what follows. Let $T^*$ be an optimal solution to \DSteinerT.
%
We use the standard flow-based linear programming relaxation for \DSteinerT given in~\eqref{lpDSteinerTree} in Section~\ref{sec:directed}. For the sake of completeness, we report the linear program below.\footnote{Note that here the graph is denoted as $G=(V,A)$ instead of $G''=(V'',A'')$.}
   \begin{align}\label{lpDSteinerTreeprime}
 \text{minimize}\quad \textstyle\sum_{v \in V} x_v c_v & \tag{LP-DST}\\
  \text{subject to}\quad \textstyle\sum_{P \in \mathcal{P}_t}g^t_P &= 1, &&\forall t \in R\label{lpDSteinerTree:overallflowprime}\\
    \textstyle\sum_{P \in \mathcal{P}_t:v \in P} g^t_{P}&\le x_v, &&\forall v \in V, t \in R\label{lpDSteinerTree:capacityprime}\\
  \textstyle  0 \le &x_v \le 1, &&\forall v\in V\notag\\
  \textstyle  0 \le &g^t_P \le 1, &&\forall t\in R, P \in \mathcal{P}_t .\notag
\end{align}

It is easy to see that $OPT$, the optimum for~\eqref{lpDSteinerTreeprime}, provides a lower bound to $c(T^*)$. In fact, the solution to~\eqref{lpDSteinerTreeprime} in which $x_v$ is set to 1 if $v\in V(T^*)$ and 0 otherwise, and $g^t_P$ is set to 1 if $P$ is the unique path from $r$ to $t$ in $T^*$ and to 0 otherwise, is feasible for~\eqref{lpDSteinerTreeprime} and has value $\sum_{v \in V} x_v c_v=c(T^*)$.

Let $\{x_v\}_{v\in V}$ be an optimal solution for~\eqref{lpDSteinerTreeprime} and let $S\subseteq V$ be the set of all nodes $v$ with $x_v >0$.
Let $U \subseteq S$ be the set of all nodes with $x_v \ge \frac{1}{\sqrt{|V\setminus R|}}$ for any $v \in U$. Note that nodes in $R$ and $r$ belong to $U$ since we need to send one unit of flow from $r$ to any terminal by Constraint~\eqref{lpDSteinerTree:overallflowprime}.
We call a terminal $t \in R$ a \emph{cheap terminal} if there exists a path from $r$ to $t$ in $G[U]$. We call a terminal $t \in R$ an \emph{expensive terminal} otherwise. Let $CH$ and $EX$ be the set of all cheap and expensive terminals in $R$, respectively.

We now show that we can compute in polynomial time two trees spanning $CH$ and $EX$, resp., and then we show how to merge the two trees into a single tree with cost $O\left((OPT+F)\sqrt{|V\setminus R|}\log{|R|}\right)$.
We first show how to compute a tree $T^{CH}$ rooted at $r$ spanning all the cheap terminals $CH$ with cost $c(T^{CH})\leq\sqrt{|V\setminus R|}\cdot OPT$.
\begin{lemma}\label{clSpanning-Cheap-Terminals}
There exists a polynomial time algorithm that finds a tree $T^{CH}$ rooted at $r$ spanning all the cheap terminals $CH$ with cost $c(T^{CH})\leq\sqrt{|V\setminus R|}\cdot OPT$.
\end{lemma}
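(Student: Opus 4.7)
The plan is to take a BFS/DFS arborescence inside the subgraph induced by $U$ and to charge the cost of every node used back to the LP optimum via the threshold defining $U$. First I would note that both the root $r$ and every terminal $t\in R$ satisfy $x_r=x_t=1$: this is forced by Constraint~\eqref{lpDSteinerTree:capacityprime} applied to the node $r$ (resp.\ the node $t$ itself) together with Constraint~\eqref{lpDSteinerTree:overallflowprime}, since every flow path in $\mathcal{P}_t$ must pass through both $r$ and $t$. In particular $\{r\}\cup R\subseteq U$, and, by the definition of a cheap terminal, each $t\in CH$ is reachable from $r$ in $G[U]$.

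Given these observations, the algorithm is immediate: compute $U$ from the LP solution and output any out-arborescence $T^{CH}$ of $G[U]$ rooted at $r$, restricted to the nodes reachable from $r$ (for instance, a BFS out-tree). This clearly runs in polynomial time, and the reachability property of cheap terminals ensures $CH\subseteq V(T^{CH})\subseteq U$.

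For the cost bound, the key step is a standard thresholding observation: every $v\in U$ satisfies $x_v\ge 1/\sqrt{|V\setminus R|}$, so $c_v\le \sqrt{|V\setminus R|}\,x_v c_v$. Summing this inequality over $V(T^{CH})\subseteq U$ and using that $\sum_{v\in V} x_v c_v = OPT$ yields
\[
c(T^{CH}) \;\le\; \sum_{v\in U} c_v \;\le\; \sqrt{|V\setminus R|}\sum_{v\in U} x_v c_v \;\le\; \sqrt{|V\setminus R|}\cdot OPT,
\]
as required. There is no substantive obstacle; the only subtlety worth flagging is ensuring that $r\in U$ so that the arborescence is well-defined, which is handled by the identity $x_r=1$ derived above.
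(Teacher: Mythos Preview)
Your proof is correct and follows essentially the same approach as the paper: build an out-tree inside $G[U]$ rooted at $r$ (the paper uses a shortest-path tree, you use a BFS arborescence, but either works since the bound is via $\sum_{v\in U}c_v$) and then apply the threshold inequality $c_v\le\sqrt{|V\setminus R|}\,x_vc_v$ to charge the cost back to $OPT$. The only difference is cosmetic: the paper records $r,R\subseteq U$ in the surrounding text rather than inside the proof.
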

\begin{proof}
By definition, each terminal $t$ in $CH$ is reachable from $r$ through some path $P$ that contains only nodes in $U$, i.e., $V(P) \subseteq U$. 
Thus, we compute a shortest path tree $T^{CH}$ rooted at $r$ in $G[U]$ spanning all cheap terminals. For tree $T^{CH}$ we have $\sum_{v \in V(T^{CH})} x_v c_v \leq \sum_{v \in U} x_v c_v \le \sum_{v \in V} x_v c_v =OPT$.
By definition of $U$, we have $x_v \ge \frac{1}{\sqrt{|V\setminus R|}}$ for any $v \in U$, then 
\[
c(T^{CH})=\!\sum_{v \in V(T^{CH})}\!c_v\leq\sqrt{|V\setminus R|}\sum_{v \in V(T^{CH})} x_v c_v  \leq \sqrt{|V\setminus R|}\cdot OPT.\qedhere
\]
\end{proof}

We next show how to compute in polynomial time a tree $T^{EX}$ rooted at $r$ spanning all the expensive terminals $EX$ with cost $c(T^{EX})=O\left((OPT+F)\sqrt{|V\setminus R|}\log{|R|}\right)$. The algorithm to build $T^{EX}$ can be summarized as follows. We first compute, for each $t\in EX$, the set $X_t$ of nodes $w$ in $S\setminus U$ for which there exists a path $P$ from $w$ to $t$ that uses only nodes in $U\cup \{w\}$, i.e., $V(P)\setminus \{w\} \subseteq U$. Then, we compute a small-size hitting set $X'$ of all $X_t$. Finally, we connect $r$ to the nodes of $X'$ and the nodes of $X'$ to those in $EX$ in such a way that each node $t$ in $EX$ is reached from one of the nodes in $X'$ that hits $X_t$. The bound on the cost of $T^{EX}$ follows from the size of $X'$ and from the cost of nodes in $U$.

\begin{lemma}\label{clSpanning-Expensive-Terminals}
There exists a polynomial time algorithm that finds a tree $T^{EX}$ rooted at $r$ spanning all the expensive terminals $EX$ with cost $c(T^{EX})\leq (OPT+F)\sqrt{|V\setminus R|}\log{|R|}$.
\end{lemma}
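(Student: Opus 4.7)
The plan is to formalize the three-step construction sketched before the statement: identify for each expensive terminal $t$ a set $X_t$ of candidate ``gateway'' vertices in $S\setminus U$, pick a small subset $X'$ that meets every $X_t$, and glue together shortest $r\to w$ paths with $U$-paths from each $w\in X'$ to the terminals it covers.

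The key structural claim is that $\sum_{w\in X_t} x_w \ge 1$ for every $t\in EX$. Because $t$ is expensive, no $r\to t$ path lies in $G[U]$; and because capacity constraint~\eqref{lpDSteinerTree:capacityprime} forces every $P\in \mathcal{P}_t$ with $g^t_P>0$ to have $V(P)\subseteq S$, each such flow-carrying path must contain a vertex of $S\setminus U$. Tagging $P$ by the last such crossing vertex $w(P)$ puts $w(P)$ in $X_t$, since after $w(P)$ the path stays in $U$. Summing the flow values over tags and using Constraint~\eqref{lpDSteinerTree:overallflowprime} gives the inequality.

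Given this inequality, $\{x_w\}_{w\in S\setminus U}$ is a feasible fractional hitting set for the family $\{X_t : t\in EX\}$. Its value is bounded by $\sqrt{|V\setminus R|}$, because $x_w<1/\sqrt{|V\setminus R|}$ on $S\setminus U$ and $|S\setminus U|\le |V\setminus R|$. Applying the $O(\log m)$-approximation for set cover with $m=|EX|\le |R|$ (via greedy or LP rounding) yields in polynomial time a hitting set $X'\subseteq S\setminus U$ with $|X'|=O(\sqrt{|V\setminus R|}\log|R|)$. To build $T^{EX}$, I would add, for each $w\in X'$, a shortest $r\to w$ path in $G$ and, for each $t\in EX$, a path from some chosen $w(t)\in X'\cap X_t$ to $t$ whose internal vertices lie in $U$ (which exists by definition of $X_t$); the union is connected and rooted at $r$, so any out-tree of it rooted at $r$ spans $EX$. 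The total cost is at most $F\cdot |X'| + \sum_{v\in U} c_v \le F\cdot |X'| + \sqrt{|V\setminus R|}\cdot OPT$, using $x_v\ge 1/\sqrt{|V\setminus R|}$ on $U$. Substituting the bound on $|X'|$ gives $c(T^{EX}) = O((OPT+F)\sqrt{|V\setminus R|}\log|R|)$.

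The main technical point is the LP inequality of the second paragraph: it is what lets us treat the collection $\{X_t\}_{t\in EX}$ as an implicit set-cover instance that is already fractionally solved by $x$, which is essential for obtaining a small $X'$. The threshold $1/\sqrt{|V\setminus R|}$ separating $U$ from $S\setminus U$ is then used twice in the cost analysis: once (via $x_v\ge 1/\sqrt{|V\setminus R|}$ on $U$) to upper-bound $\sum_{v\in U} c_v$ by $\sqrt{|V\setminus R|}\cdot OPT$, and once (via $x_w<1/\sqrt{|V\setminus R|}$ on $S\setminus U$) to upper-bound the fractional hitting-set value by $\sqrt{|V\setminus R|}$, producing the $\sqrt{|V\setminus R|}\log|R|$ bound on $|X'|$.
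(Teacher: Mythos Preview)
Your proof is correct and follows essentially the same approach as the paper: the same definition of $X_t$, the same flow argument (the paper derives $1\le\sum_{w\in X_t}x_w$ inside the proof of Claim~\ref{clBound-X-t}), the same hitting-set construction of $X'$, and the same cost decomposition into $F\cdot|X'|$ plus $c(U)\le\sqrt{|V\setminus R|}\cdot OPT$. The only cosmetic difference is in the hitting-set step: the paper first extracts the cardinality bound $|X_t|\ge\sqrt{|V\setminus R|}$ and then invokes a minimum-set-size hitting-set lemma (Claim~\ref{clHittingSet}), whereas you keep the LP values $x_w$ as a fractional hitting set of value at most $\sqrt{|V\setminus R|}$ and round; both routes yield $|X'|=O(\sqrt{|V\setminus R|}\log|R|)$.
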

\begin{proof}
Let $U' \subseteq S$ be the set of all nodes $v$ with $0<x_v <\frac{1}{\sqrt{|V\setminus R|}}$, i.e., $U'=S\setminus U$. 
Recall that for any expensive terminal $t \in EX$, we define $X_t$ as the set of nodes $w$ in $U'$ such that there exists a path from $w$ to $t$ in $G[U \cup \{w\}]$.

We first show a lower bound on the size of sets $X_t$, for each $t \in EX$, which will allow us to compute a small hitting set of all such sets. 
\setcounter{claim}{3}
\begin{claim}\label{clBound-X-t}
$|X_t| \ge \sqrt{|V\setminus R|}$, for each $t \in EX$.
\end{claim}
\begin{proof}
We know that (i) each terminal must receive one unit of flow (by Constraint~\eqref{lpDSteinerTree:overallflowprime} of~\eqref{lpDSteinerTreeprime}), (ii) any path $P$ from $r$ to any $t \in EX$ in the graph $G[S]$ contains at least one node $w \in U'$ (by definition of expensive terminals), and, (iii) in any path $P$ from $r$ to any $t \in EX$, the node $w \in U'$ in $P$ that is closest to $t$ is a member of $X_t$, i.e., $w \in X_t$ (by definition of $X_t$), therefore any flow from $r$ to $t$ must pass through a node $w \in X_v$. This implies that the nodes in $X_t$ must send one unit of flow to $t$ in total. Since each of them can only send at most $1/\sqrt{|V\setminus R|}$ amount of flow, they must be at least $\sqrt{|V\setminus R|}$. Formally, we have 
\begin{align*}
 1 &= \sum_{P\in\mathcal{P}_t} g^t_P \leq \sum_{w \in X_t}\sum_{P\in\mathcal{P}_t:w\in P} g^t_P\\
 &\leq \sum_{w \in X_t} x_w < \sum_{w \in X_t} \frac{1}{\sqrt{|V\setminus R|}} =\frac{|X_t|}{\sqrt{|V\setminus R|}},
\end{align*}
which implies that $|X_t| \ge \sqrt{|V\setminus R|}$. The first equality follows from Constraint~\eqref{lpDSteinerTree:overallflowprime} of~\eqref{lpDSteinerTreeprime}, the first inequality is due to the fact that, by definition of $X_t$, any path $P$ from $r$ to a $t \in EX$ contains a node $w \in X_t$, the second inequality is due to Constraint~\eqref{lpDSteinerTree:capacityprime} of~\eqref{lpDSteinerTreeprime}, the last inequality
is due to $X_t\subseteq U'$ and $x_w<\frac{1}{\sqrt{|V\setminus R|}}$, for each $w\in U'$.
This concludes the proof of the claim.
\end{proof}

We use the following well-known result (see, e.g., Lemma~3.3 in~\cite{Chan2007})
to find a small set of nodes that hits all sets $X_t$, for all $t\in EX$. %
\begin{claim}\label{clHittingSet}
Let $V'$ be a set of $M$ elements and $\textstyle\sum=(X'_1, \dots, X'_N)$ be a collection of subsets of $V'$ such that $|X'_i|\ge L$, for each $i\in [N]$.
There is a deterministic algorithm that runs in polynomial time in $N$ and $M$ and finds a subset $X' \subseteq V'$ with $|X'|\le (M/L)\ln{N}$ and $X' \cap X'_i\ne \emptyset$ for all $i\in[N]$.
\end{claim}

Thanks to Claim~\ref{clBound-X-t}, we can use the algorithm of Claim~\ref{clHittingSet} to find a set $X'\subseteq \bigcup_{t \in EX}X_t$ such that $X' \cap X_t\ne \emptyset$, for all $t\in EX$, whose size is at most $|X'| \le \frac{|V\setminus R|\log{|R|}}{\sqrt{|V\setminus R|}} =\sqrt{|V\setminus R|}\log{|R|}$, where the parameters of Claim~\ref{clHittingSet} are $L=\sqrt{|V\setminus R|}$, $N=|EX|\leq |R|$, and $M=\big|\bigcup_{t \in EX}X_t\big| \leq |V\setminus R|$, since $x_t=1$, for each $t\in R$, and hence no node in $R$ can belong to $\bigcup_{t \in EX}X_t \subseteq U'$.

Since for any $t \in EX$ and any $w \in X_t$ there exists a path from $w$ to $t$ in $G[U \cup \{w\}]$ and $X' \cap X_t\ne \emptyset$, then there exists at least a node $w\in X'$ for which there is a path from $w$ to $t$ in $G[U \cup \{w\}]$.

Now, for each $w \in X'$, we find a shortest path from $r$ to $w$ in $G$. Let $\mathcal{P}_1$ be the set of all these shortest paths. 
We also select, for each $t\in EX$, an arbitrary node $w$ in $X'\cap X_t$ and compute a shortest path from $w$ to $t$ in $G[U\cup \{w\}]$. Let $\mathcal{P}_2$ be the set of all these shortest paths. 
Let $V(\mathcal{P}_1)$ and $V(\mathcal{P}_2)$ denote the union of all nodes of the paths in $\mathcal{P}_1$ and $\mathcal{P}_2$, respectively, and let $G^{EX}$ be the graph induced by all the nodes in $V(\mathcal{P}_1)\cup V(\mathcal{P}_2)$.
We compute a tree $T^{EX}$ rooted at $r$ spanning $G^{EX}$. Note that such a tree exists as in $G^{EX}$ we have for each $w \in X'$ a path from $r$ to $w$ and, for each terminal $t\in EX$, at least a path from one of the nodes in $X'$ to $t$.

We next move to bound the cost of $T^{EX}$; Indeed, we bound the cost of all nodes in $G^{EX}$. 
Since $|X'|\le\sqrt{|V\setminus R|}\log{|R|}$ and $dist(r,v) \leq F$ for any node $v$,
then $c(V(\mathcal{P}_1))\leq F\sqrt{|V\setminus R|}\log{|R|}$. 
Since $x_v \ge \frac{1}{\sqrt{|V\setminus R|}}$ for any $v \in U$, and $\sum_{v \in U}x_v c_v\leq \sum_{v \in S}x_v c_v \leq OPT$, then $c(U)=\sum_{v \in U}c_v\leq \sqrt{|V\setminus R|} \cdot OPT$. Therefore, since $V(\mathcal{P}_2)\setminus X'\subseteq U$, then $c(V(\mathcal{P}_2)\setminus X')\leq c(U)\leq \sqrt{|V\setminus R|} \cdot OPT$.
Overall, $G^{EX}$ costs at most $(OPT+F)\sqrt{|V\setminus R|}\log{|R|}$. This finishes the proof of Lemma~\ref{clSpanning-Expensive-Terminals}.
\end{proof}

We can now prove Theorem~\ref{thDSteinerTree}.

\begin{proof}[Proof of Theorem~\ref{thDSteinerTree}]
Since both $T^{EX}$ and $T^{CH}$ are rooted at $r$, we can find a tree $T$ rooted at $r$ spanning all nodes $V({T^{EX}})\cup V({T^{CH}})$.

By Lemmas~\ref{clSpanning-Cheap-Terminals} and~\ref{clSpanning-Expensive-Terminals} we have that the cost of $T$ is 
$c(T)=O\left((OPT + F)\sqrt{|V\setminus R|}\log{|R|}\right)$. This shows the second part of the statement.

To show the bound on the approximation ratio, we observe that $OPT\leq c(T^*)$. Moreover, we can assume that $F\leq (1+\epsilon)c(T^*)$ since we can remove from the graph all the nodes $v$ such that $dist(r, v) > (1+\epsilon)c(T^*)$
by estimating the value of $c(T^*)$ using a binary search (see~\ref{apx:binarysearch} for more details).
Therefore, the cost of $T$ is $c(T)=O\left(\sqrt{|V\setminus R|}\log{|R|}\right)c(T^*)$.
\end{proof}

\section{Discussion and future research}
\DRCC and \CBMC are basic combinatorial optimization problems with many applications in diverse areas such as logistics, wireless sensor networks, and bioinformatics. Besides their relevance, their approximation properties still need to be better understood. In this paper, we provide an approximation algorithm for \DRCC with a sublinear approximation ratio. Our results also imply an improved approximation for the particular case of additive prize function (\DBOM).

The most interesting but very ambitious research question is whether there is a polynomial lower bound on the approximability of \DRCC. In other words, whether it is hard to compute in polynomial time a solution that is asymptotically better than a polynomial factor from the optimum. 
The same question for the directed Steiner tree problem has been open for a long time. However, it is known that the integrality gap of the standard flow-based LP relaxation for \DRCC is unbounded if no budget violation is allowed~\cite{bateni2018improved} and has a polynomial lower bound for the directed Steiner tree problem~\cite{li2025polynomial}. This suggests that we cannot significantly improve our approximation factors for \DRCC by using the linear relaxation~\eqref{lpDRCC}. 
Using LP-hierarchies~\cite{rothvoss2011directed, friggstad2014linear} could be a promising research direction to improve our approximation factors. For the Directed Steiner Network, it is known that the integrality gap of the Lasserre Hierarchy has a polynomial lower bound~\cite{DinitzNZ20}.
An even harder research question is to find a lower bound on the approximation of \CBMC.

The techniques introduced in this paper might be useful to approximate other more general network design problems. 
One interesting example is the case where the prize function is a monotone submodular set function of the nodes. In this case, the best algorithm is the one in~\cite{d2022budgeted} that achieves an approximation factor of $O(\frac{1}{\epsilon^{3}}\sqrt{B})$-approximation algorithm with a budget violation of a factor $1+\epsilon$, for any $\epsilon\in (0,1]$.
Our algorithms cannot directly be applied to this case because the linear program~\eqref{lpDRCC} does not give an upper bound to the optimum. Therefore, the first step in using our techniques should be to find a suitable linear relaxation.

\bibliographystyle{abbrv} 
\bibliography{references}

\appendix

\section*{Appendix}

\section{Details on the estimation for the optimum directed Steiner tree}\label{apx:binarysearch}

Let $c_{\min}$ be the minimum positive cost of a vertex and $c_{M}$  be the cost of all nodes in $G$, that is, $c_{M} = \sum_{v\in V}c(v)$. We know that $c(T^*)\leq c_{M}$. We estimate the value of $c(T^*)$ by guessing $N$ possible values, where $N$ is the smallest integer for which $c_{\min}(1+\epsilon)^{N-1}\geq c_{M}$.

For each guess $i\in [N]$, we remove the nodes $v$ with $dist(r, v)> c_{\min}(1+\epsilon)^{i-1}$
, and compute a Steiner Tree in the resulting graph, if it exists, with the algorithm in Section~\ref{sec:DSteinerT}. Eventually, we output the computed Steiner Tree with the smallest cost.

Since $c_{\min}(1+\epsilon)^{N-2}< c_{M}$, the number $N$ of guesses is smaller than $\log_{1+\epsilon}(c_{M}/c_{\min}) + 2$, which is polynomial in the input size and in $1/\epsilon$.

Let $i\in [N]$ be the smallest value for which $c_{\min}(1+\epsilon)^{i-1}\geq c(T^*)$. Then, $c(T^*)> c_{\min}(1+\epsilon)^{i-2}$ and for each node $v$ in the graph used in guess $i$, we have $dist(r, v)\leq c_{\min}(1+\epsilon)^{i-1} <(1+\epsilon)c(T^*)$. 
Since we output the solution with the minimum cost among those computed in the guesses for which our algorithm returns a feasible Steiner Tree, then the final solution will not be worse than the one computed at guess $i$. Hence, in Section~\ref{sec:DSteinerT} we focus on guess $i$ and assume that $dist(r, v) \leq (1+\epsilon)c(T^*)$
, for all  nodes $v\in V$.

\end{document}